\documentclass[oneside,11pt]{article}

\usepackage[usenames, dvipsnames]{color}
\usepackage{graphicx}
\usepackage{epstopdf}
\usepackage{geometry}
\usepackage[english]{babel}
\usepackage[numbers]{natbib}
\bibliographystyle{unsrtnat}
\usepackage{amssymb,amsmath,amsfonts,graphicx,epsfig}
\usepackage{authblk}
\usepackage{caption}

\providecommand{\keywords}[1] {
  \small
  \textbf{\textit{Keywords---}} #1
}

\geometry{
    top=2.5cm,
    left = 2.5cm,
    right = 2.5cm,
    bottom = 2.5cm
}

\newcommand{\comment}[1]{}

\newtheorem{theorem}{Theorem}
\newtheorem{definition}{Definition}
\newtheorem{prop}{Proposition}
\newcommand{\QED} {\hfill$\square$}
\newtheorem{lem}{Lemma}
\newenvironment{proof} {\par \noindent \textbf{Proof. }}{\QED \par \smallskip \par}

\begin{document}

\title{Another estimation of Laplacian spectrum of the Kronecker product of graphs}
\date{}

\author[1]{Milan Ba\v{s}i\'{c}}
\author[2]{Branko Arsi\'{c}}
\author[3]{Zoran Obradovi\'{c}}
\affil[1]{Department of Computer Science, University of Ni\v{s},
Serbia} \affil[2]{Department of Mathematics and Informatics,
University of Kragujevac, Serbia} \affil[3]{Department of Computer
and Information Sciences, Center for Data Analytics and Biomedical
Informatics, Temple University,
          Philadelphia, PA, USA}
\affil[ ]{\textit{basic\_milan@yahoo.com, brankoarsic@kg.ac.rs,
zoran.obradovic@temple.edu}}

\comment{
\author{\name Milan Ba\v{s}i\'{c} \email basic\_milan@yahoo.com \\
       \addr Department of Computer Science\\
       University of Ni\v{s}\\
       Vi\v{s}egradska 33, 18000 Ni\v{s}, Serbia
       \AND
       \name Branko Arsi\'{c} \email brankoarsic@kg.ac.rs \\
       \addr Department of Mathematics and Informatics\\
       University of Kragujevac\\
       Radoja Domanovi\'{c}a 12, 34000 Kragujevac, Serbia
       \AND
       \name Zoran Obradovi\'{c} \email zoran.obradovic@temple.edu \\
       \addr Department of Computer and Information Sciences\\
       Center for Data Analytics and Biomedical Informatics \\
       Temple University\\
       Philadelphia, PA, USA
} }

\maketitle

\begin{abstract}

\comment{
The relationships between eigenvalues and eigenvectors of a product
graph and those of its factor graphs have been known for the
standard products, while characterization of Laplacian eigenvalues
and eigenvectors of the Kronecker product of graphs using the
Laplacian spectra and eigenvectors of the factors turned out to be quite challenging
and has remained an open problem to date. Motivated by the work
[Hiroki Sayama. Estimation of laplacian spectra of direct and strong
product graphs. \emph{Discrete Applied Mathematics}, 205:160-170,
2016.] where the product of Laplacian eigenvectors are used for the
estimation of original ones, in this paper we provide an alternative
estimation of Laplacian eigenvectors by using eigenvectors of
normalized Laplacian. It was also obtained that both approximations
produced reasonable estimations of Laplacian spectra with percentage
errors confined within a $\pm$10\% range for most eigenvalues. We
emphasize that the median of the percentage errors of our estimated
Laplacian spectrum almost coincides with the $x$-axis, while a
sudden jump at the beginning followed by a gradual decrease for the
percentage errors was noticed in the case of spectrum proposed by
Sayama. Additionally, we compare certain correlation coefficients
that correspond to the original and approximated vectors by
calculating their explicit formulas, as well as their expected
values, for Erd\H{o}s-R\'{e}nyi type graphs. Moreover, we show
that a distribution of percentage errors becomes smaller when the
network grows or the edge density level increases.

{\color{red}
The relationships between eigenvalues and eigenvectors of a product
graph and those of its factor graphs have been known for the
standard products, while characterization of Laplacian eigenvalues
and eigenvectors of the Kronecker product of graphs using the
Laplacian spectra and eigenvectors of the factors turned out to be quite challenging
and has remained an open problem to date.

-Several approaches for the estimation  of Laplacian ...

-It turns out that not all the methods are practical to apply in network science models, particularly in the context of multilayer networks.

-Here we develop practical computationally efficient method  to estimate Laplacian spectra of Kronecker product graphs from spectral properties of their factor graphs which is more stable than the alternatives proposed in the literature. 

Indeed, we emphasize that the median of the percentage errors of our estimated Laplacian spectrum almost coincides with the $x$-axis while percentage errors confined (confidence of the estimations) within a $\pm$10\% range for most eigenvalues is retained as it was reported in other approximation methods. The percentage errors of some of the proposed methods have sudden jumps at the beginning followed by a gradual decrease for the
percentage errors.

Moreover, we theoretically prove that the percentage errors becomes smaller when the
network grows or the edge density level increases.

-Na kraju za koef korelacije za vektore

-U conclusion-u (i u introduction-u) (ali ne znam da li je ok u abstractu): some experimental examinations considering multilayer network suggest
that our approximation method gives the minimal model error in comparison with the other approaches (koji modeli???) on some classes of graphs.  
}
}

The relationships between eigenvalues and eigenvectors of a product
graph and those of its factor graphs have been known for the
standard products, while characterization of Laplacian eigenvalues
and eigenvectors of the Kronecker product of graphs using the
Laplacian spectra and eigenvectors of the factors turned out to be quite challenging
and has remained an open problem to date.
Several approaches for the estimation of Laplacian spectrum of the Kronecker product of graphs have been proposed in recent years.  
However, it turns out that not all the methods are practical to apply in network science models, particularly in the context of multilayer networks. Here we develop a practical and computationally efficient method to estimate Laplacian spectra of this graph product from spectral properties of their factor graphs which is more stable than the alternatives proposed in the literature. 
We emphasize that a median of the percentage errors of our estimated Laplacian spectrum almost coincides with the $x$-axis, unlike the alternatives which have sudden jumps at the beginning followed by a gradual decrease for the
percentage errors. The percentage errors confined (confidence of the estimations) up to $\pm$10\% for all considered approximations, depending on a graph density.
Moreover, we theoretically prove that the percentage errors becomes smaller when the network grows or the edge density level increases.
Additionally, some novel theoretical results considering the exact formulas and lower bounds related to the certain correlation coefficients corresponding to the estimated eigenvectors are presented.

\end{abstract}

\keywords{Kronecker product of graphs, Estimated Laplacian
eigenvalues and eigenvectors of graph product}

\section{Introduction}

Many real-life interactions throughout nature and society, such as
protein-protein interaction networks \cite{prvzulj2011protein},
connections among image pixels \cite{shi2000normalized}, Internet
social networks \cite{arsic2016facebook}, the evolution of a quantum
system \cite{basic2014weighted} etc., could be naturally described
and represented in the context of large networks. However, the
properties of such large networks can not be easily determined
because of a large computational complexity of methods and
algorithms performed on their corresponding graph matrices. Fortunately, large networks are often composed of
several smaller pieces, for example motifs \cite{milo2002network},
communities \cite{girvan2002community}, or layers
\cite{de2013mathematical}. In this case, by using the properties of
these smaller structures, we can determine the properties of large
networks obtained by using some operations \cite{arenas2007size,
skardal2012hierarchical}. In graph theory there are three
fundamental graph products which refer to the large network's
construction from two or more small graphs: Cartesian product,
Kronecker (direct) product, and strong product. In each case, the
product of graphs $G$ and $H$ is a graph whose vertex set is the
Cartesian product $V(G) \times V (H)$ of sets, while each product
has different rules for edge creation. Computer science is one of
the many fields (such as mathematics and engineering) in which graph products, with their own set of
applications and theoretical interpretations, are becoming
commonplace. As one specific example, large networks such as the
Internet graph, with several hundred million hosts, can be
efficiently modeled by subgraphs of powers of small graphs with
respect to the Kronecker product \cite{leskovec2010kronecker}. More
recently, graph products have also began to appear in
network science, where multiplication of graphs are often used as a
formal way to describe certain types of multilayer network
topologies
\cite{de2013mathematical}\cite{sole2013spectral}\cite{kivela2014multilayer}.
Products of graphs that make use of spectral methods have also found
important applications in interconnection networks, massively
parallel computer architectures and diffusion schemes
\cite{elsasser2004optimal}.

It was recognized in about the last twenty years that graph spectra
have many important applications in various areas, especially in the
fields of computer sciences (see, e.g.,
\cite{cvetkovic2011selected}\cite{cvetkovic2011graph}), such as
Internet technologies, computer vision, pattern recognition, data
mining, multiprocessor systems, statistical databases and many
others. One of the important questions to be addressed in this area,
and which have been studied extensively by many researchers, is how
to characterize spectral properties of a product graph using those
of its factor graphs. Relationships between spectral properties of a
product graph and those of its factor graphs have been known for the
spectra of degree and adjacency matrices for all of the three
products, as well as the Laplacian spectra for Cartesian product
\cite{sayama2016estimation}. Results describing the adjacency matrix
and its spectra of the product graphs can be also found in
\cite{brouwer2011spectra} and \cite{cvetkovic1980spectra}, while a
complete characterization of the Laplacian spectrum of the Cartesian
product of two graphs has been done by Merris
\cite{merris1994laplacian}. In the paper \cite{glass2017structured},
the authors tried to exploit the benefits of the Kronecker graph
representation, which is used as a replacement for the multilayer
network. However, they had to face an open problem, because the
Laplacian spectrum of the Kronecker product of two graphs graphs can
not be characterized by using the Laplacian spectra of the factors.
In \cite{barik2015laplacian}, the authors gave the explicit complete
characterization of the Laplacian spectrum of the Kronecker product
of two graphs in some particular cases. Since it seems that an
explicit formula can not be obtained for the general case, in
\cite{sayama2016estimation} the authors developed empirical methods
to estimate the Laplacian spectra of the Kronecker of graphs from
spectral properties of their factor graphs.

In this paper we develop an alternative practical method for an
estimation of the the Laplacian spectrum and eigenvectors of the Kronecker
(direct) product of two graphs. We noticed that estimated
eigenvalues and eigenvectors of these approximations express
different behavior depending on the type of network topology. The
effectiveness of the proposed methods are evaluated through
numerical experiments, where experiments are performed on three
types of graphs: Erd\H{o}s-R\'{e}nyi, Barab\'{a}si-Albert and
Watts-Strogatz, while the edge density percentage is varied over
10\%, 30\%, and 65\%. In order to see whether, our novel
approximation or the one proposed by Sayama in \cite{sayama2016estimation}, is
more suitable for the original eigenvalues and eigenvectors, we
compare them in the following two ways. 
First, we give an empirical and some theoretical evidence that the Kronecker product of
eigenvectors of normalized Laplacian matrices of factor graphs
 can be also used as an approximation for the eigenvectors of Laplacian matrix of Kronecker product of graphs.
It can be done by comparing the
correlation coefficients that correspond to the approximated vectors for both approximation in regard to different types and edge density
levels of graphs. 
Then, in order to test how close the estimated
to the original eigenvalues of Laplacian of the Kronecker product of
graphs for both approximations are, the difference between them in terms
of a distribution of percentage errors is reported.
We show that a distribution of percentage
errors between novel estimated and original spectra is more stable
than the error obtained for the Sayama's spectrum and it is almost
uniformly distributed around 0, all in the case of
Erd\H{o}s-R\'{e}nyi and Watts-Strogatz random networks.
It is also noticed that
both approximations produced reasonable estimations of Laplacian
spectra with percentage errors confined within a $\pm$10\% range for
most eigenvalues, with a small variations depending on the type and
edge density levels of random networks.
Moreover, we theoretically prove
that the percentage errors become smaller when the
network grows or the edge density level increases for Erd\H{o}s-R\'{e}nyi random networks.
In the case
of Barab\'{a}si-Albert random networks, similar number of jumps in the graphs of 
percentage errors distribution is noticed for the both proposed estimated spectra.

The remainder of our paper is organized through the following
sections. In Section 2 we will explain the motivation and
assumptions for our alternative approach developed for the
estimation of the Laplacian eigenvalues and eigenvectors of the
Kronecker product of graphs. Moreover, in subsection 2.1 we recall some results and
techniques used in \cite{sayama2016estimation} and provide a proof
that all estimated eigenvalues proposed by Sayama are nonnegative. 
In subsection 2.2 we introduce the Kronecker product of
eigenvectors of normalized Laplacian matrices of factor graphs as a potential approximation for the actual eigenvectors
the Laplacian matrix of Kronecker product of graphs 
and by using them we get the formula (\ref{eq:novel_spectrum2}) for estimating the Laplacian spectra of Kronecker product of graphs.
In Section 3 we
report a behavior of the estimated eigenvalues and eigenvectors (for both approximations)
compared to the original ones with regard
to the different types of graphs and different edge density levels. The comparison between estimated and
original spectra has been done by calculating the percentage error, while the correlation coefficients are used to express the difference between eigenvectors. In subsection 3.1.2 we provide some new theoretical results related to
the correlation coefficients that correspond to the estimated vectors for both approximation
and give certain explanation why the Kronecker product of
eigenvectors of normalized Laplacian matrices of factor graphs can be used as suitable approximation for the actual eigenvectors
the Laplacian matrix of Kronecker product of graphs.
In Theorem ~\ref{thm:coeff cor calculation} and Theorem~\ref{thm:corrcoeff_expected}
we provide exact formulas for the certain correlation coefficients and the expected values of the correlation coefficients, respectively, 
corresponding to the eigenvectors proposed in
\cite{sayama2016estimation}. 
From the formulas for the correlation coefficients follow that they depend only on the degrees of one of the factor graphs  
and hence they are mutually equal.
According to the expected value of the previous correlation
coefficients obtained by Theorem~\ref{thm:corrcoeff_expected} and the inequality given by Theorem~\ref{thm:second inequality}, we obtain that the correlation coefficients corresponding to our estimated vectors in
some cases can be greater than the coefficient correlations corresponding to the eigenvectors proposed in
\cite{sayama2016estimation}. 
Finally, using Theorem~\ref{thm:er_theory}
we give a theoretical explanation of why the estimated eigenvalues
for the random graphs become more accurate to the real values when
the network grows or the edge density level increases. The paper
concludes with a summary of key points and directions for further
work. We also point out that these approximations could have a very important application in learning models based on multilayer networks. \cite{glass2017structured}.




\section{Proposed methods}
\label{sec:gcrf-msn}

Before describing the proposed methods, we provide definitions for
concepts used throughout the paper. By $G = (V_{G},
E_{G})$ we denote a simple connected graph (without loops and multiple edges),
where $V_{G}$ is the set of vertices and $E_{G}\subseteq
{V_{G}\choose 2}$ is a set of edges of $G$. The adjacency matrix $A$
for a graph $G$ with $N$ vertices is an $N\times N$ matrix whose
$(i,j)$ entry is 1 if the $i$-th and $j$-th vertices are adjacent,
and 0 if they are not. A number of the vertices $N$ of a graph $G$
is called the order of a graph $G$. A vertex and an edge are called
incident, if the vertex is one of the two vertices that the edge
connects. The Laplacian matrix of the adjacency matrix $A$ is
defined as $L = D - A$ where $D$ is the degree matrix of $A$ (degree
matrix is a diagonal matrix where each entry $(i,i)$ is equal to the
number of edges incident to $i$-th vertex). The normalized Laplacian
matrix is defined as $\mathcal{L} =
D^{-\frac{1}{2}}LD^{-\frac{1}{2}} = I -
D^{-\frac{1}{2}}AD^{-\frac{1}{2}}$. Let $G = (V_{G}, E_{G})$ and $H
= (V_{H}, E_{H})$ be two simple connected graphs.

\begin{definition}
The Kronecker product of graphs denoted by $G\otimes H$ is a graph
defined on the set of vertices $V_{G}\times V_{H}$ such that two
vertices $(g,h)$ and $(g',h')$ are adjacent if and only if $(g,g')
\in E_{G}$ and $(h, h')\in E_{H}$.
\end{definition}

The Kronecker product of an $N\times N$ matrix $A$ and a $M\times M$
matrix $B$ is the $(NM)\times(NM)$ matrix $A\otimes B$ with elements
defined by $(A\otimes B)_{I,J} = A_{i,j}B_{k,l}$ with $I = M(i-1)+k$
and $J = M(j-1) + l$. \vspace{0.3cm}

In the rest of this section we discuss the spectral decomposition of
the Laplacian of the Kronecker product of graphs from those of its
factor graphs. Because it seems that such an explicit formula does not
exist, we need to apply some approximations in order to
obtain the estimated eigenvalues and eigenvectors.

\subsection{Estimation of Laplacian spectrum of Kronecker product graph by using the Kronecker product of Laplacian eigenvectors of factor graphs}
\label{subsec:sayama_approx}

In the following section we will explain the motivation and
assumptions from \cite{sayama2016estimation} for the proposed
approximation and show some of their properties. The Laplacian of the
Kronecker product of graphs is given by the following
\begin{equation}
\notag
\begin{split}
L_{S_{1}\otimes S_{2}} & = D_{S_{1}\otimes S_{2}} - A_{S_{1}\otimes S_{2}}\\
& = (D_{S_{1}}\otimes D_{S_{2}}) - (A_{S_{1}}\otimes A_{S_{2}})\\
& = D_{S_{1}}\otimes D_{S_{2}} - (D_{S_{1}}-L_{S_{1}})\otimes(D_{S_{2}}-L_{S_{2}})\\
& = L_{S_{1}}\otimes D_{S_{2}} + D_{S_{1}}\otimes L_{S_{2}} -
L_{S_{1}}\otimes L_{S_{2}},
\end{split}
\end{equation}
where $A_{S_{1}}$ and $A_{S_{2}}$ are the adjacency matrices and
$D_{S_{1}}$ and $D_{S_{2}}$ are the degree matrices of graphs
$S_{1}$ and $S_{2}$, respectively, where $|S_{1}| = n_{1}$ and
$|S_{2}|=n_{2}$. The idea of the proposed approximation is to assume
that $w_{i}^{S_{1}}\otimes w_{j}^{S_{2}}$, where $w_{i}^{S_{1}}$ and
$w_{j}^{S_{2}}$ are arbitrary eigenvectors of $L_{S_{1}}$ and
$L_{S_{2}}$ respectively, could be used as a substitute for the true
eigenvectors of $L_{S_{1}\otimes S_{2}}$. A motivation for this
assumption came from the fact that the Laplacian spectra of the
Kronecker product of graphs resemble those of the Cartesian product of
graphs when either factor graph is regular
\cite{barik2015laplacian}. Let $W_{S_{1}}$ and $W_{S_{2}}$ be
$n_{1}\times n_{1}$ and $n_{2}\times n_{2}$ square matrices that
contain all $w_{i}^{S_{1}}$ and $w_{j}^{S_{2}}$ as column vectors,
respectively. By making (mathematically incorrect)
assumption that $D_{S_{1}}W_{S_{1}}\approx W_{S_{1}}D_{S_{1}}$ and
$D_{S_{2}}W_{S_{2}}\approx W_{S_{2}}D_{S_{2}}$ it can be obtained that

\begin{equation}
\begin{split}
L_{S_{1}\otimes S_{2}}(W_{S_{1}}\otimes W_{S_{2}}) & = L_{S_{1}}W_{S_{1}}\otimes D_{S_{2}}W_{S_{2}} + D_{S_{1}}W_{S_{1}}\otimes L_{S_{2}}W_{S_{2}} - L_{S_{1}}W_{S_{1}}\otimes L_{S_{2}}W_{S_{2}}\\
& \approx W_{S_{1}}\Lambda_{S_{1}}\otimes W_{S_{2}}D_{S_{2}} + W_{S_{1}}D_{S_{1}}\otimes W_{S_{2}}\Lambda_{S_{2}} - W_{S_{1}}\Lambda_{S_{1}}\otimes W_{S_{2}}\Lambda_{S_{2}}\\
& = (W_{S_{1}}\otimes W_{S_{2}})\Big(\Lambda_{S_{1}}\otimes
D_{S_{2}} + D_{S_{1}}\otimes \Lambda_{S_{2}} -
\Lambda_{S_{1}}\otimes \Lambda_{S_{2}}\Big),
\end{split}
\label{eq:laplacian_spectrum}
\end{equation}
where $\Lambda_{S_{1}}$ and $\Lambda_{S_{2}}$ are diagonal matrices
with eigenvalues $\mu^{S_{1}}_{i}$ of $L_{S_{1}}$ and
$\mu^{S_{2}}_{j}$ of $L_{S_{2}}$, respectively. From the last
equation, estimated Laplacian spectrum of $S_{1}\otimes S_{2}$ could
be calculated as

\begin{equation}
\mu_{ij} = \{\mu_{i}^{S_{1}}d_{j}^{S_{2}} +
d_{i}^{S_{1}}\mu_{j}^{S_{2}} - \mu_{i}^{S_{1}}\mu_{j}^{S_{2}}\}.
\label{eq:sayama_spectrum}
\end{equation}
where $d_{i}^{S_{1}}$ and $d_{j}^{S_{2}}$ are the diagonal entries
of the degree matrices $D_{S_{1}}$ and $D_{S_{2}}$, respectively.

Here we note that the orderings of $w_{i}^{S_{1}}$ and
$w_{j}^{S_{2}}$ (and hence $\mu_{i}^{S_{1}}$ and $\mu_{j}^{S_{2}}$)
are independent of the vertex orderings in $D_{S_{1}}$ and
$D_{S_{2}}$, respectively. This can help in reducing the
mathematical inaccuracy arising from the mentioned incorrect
assumptions by finding optimal column permutations of $W_{S_{1}}$
and $W_{S_{2}}$ (influencing $\Lambda_{S_{1}}$ and
$\Lambda_{S_{2}}$). Therefore, several types of ordering of
eigenvalues ($\mu_{i}^{S_{1}}$ and $\mu_{j}^{S_{2}}$) of factor
graphs were tested \cite{sayama2016estimation}, while the degree
sequences are fixed in ascending order. It was obtained that the
most effective heuristic method is when the eigenvalues are sorted
in ascending order.

From \eqref{eq:sayama_spectrum} it can be easily seen that the
estimated spectrum always has an eigenvalue of 0, because if
$\mu_{i}^{S_{1}} = 0$ and $\mu_{j}^{S_{2}} = 0$, then $\mu_{ij}=0$.
However, it is not commented in \cite{sayama2016estimation} whether
all other estimated eigenvalues $\mu_{ij}$ are greater than or equal to
0. Notice that ~\eqref{eq:sayama_spectrum} can be rewritten as
follows:

\begin{equation}
\notag \mu_{i}^{S_{1}}(d_{j}^{S_{2}} - \frac{\mu_{j}^{S_{2}}}{2}) +
\mu_{j}^{S_{2}}(d_{i}^{S_{1}} - \frac{\mu_{i}^{S_{1}}}{2})
\,\,\,\,\text{for}\,\, 1 \leq i \leq n_{1},\, 1 \leq j \leq n_{2}.
\end{equation}

If a graph is regular then the absolute values of the eigenvalues of its
adjacency matrix are less than or equal to the regularity of the graph
(according to the Perron-Frobenius theorem, see
\cite{godsil2001algebraic}, pp. 178) and it is clear from the
definition of the Laplacian matrix that all Laplacian eigenvalues are
less than or equal to the double value of the regularity. This implies
that in the case when $S_{1}$ and $S_{2}$ are regular, we have that
$d_{j}^{S_{2}} \geq \frac{\mu_{j}^{S_{2}}}{2}$ and $d_{i}^{S_{1}}
\geq \frac{\mu_{i}^{S_{1}}}{2}$, and therefore $\mu_{ij}\geq 0$. In
the following we prove that these eigenvalues are nonnegative in the
general case.

%
%

By applying Gershgorin circle theorem on Laplacian matrix we can obtain only the inequality $d_{n_{1}}^{S_{1}} -
\frac{\mu_{n_{1}}^{S_{1}}}{2}\geq 0$ (or equivalently $\mu_{n_{1}}^{S_{1}}\leq 2d_{n_{1}}^{S_{1}}$).
Indeed, as every eigenvalue of the $n_1\times n_1$ Laplacian matrix $L=(l_{i,j})_{1\leq i,j \leq n_1}$ 
lies within the union of disks centered at $l_{i,i}=d_i^{S_1}$ with radius $R_{i}=d_{i}^{S_1}$ ($R_{i}$ is the sum of the absolute values of the non-diagonal entries in the $i$-th row for $1\leq i\leq n_1$), we can not conclude that every eigenvalue $\mu_{i}^{S_1}$ lies in the circle 
 centered at $d_i^{S_1}$ with radius $d_{i}^{S_1}$, i.e. $\mu_{i}^{S_1}\leq 2d_i^{S_1}$, $1\leq i\leq n_1-1$ (see
Figure~\ref{fig:gershgorin}) .


It turns out that the inequality $\mu_{i}^{S_1}\leq 2d_i^{S_1}$ can be proved by
using Courant-Fischer theorem for every index
$i$. Namely, it is easy to see that the quadratic form $x^T L x$ in
respect to the Laplace matrix $L$ and an arbitrary vector
$x=(x_1,x_2,\ldots, x_{n_1})$ can be rewritten in the following way
$$
x^T L x = \sum_{j=1}^{n_1} d_j^{S_1}x_j^2-2\sum_{(i,j)\in E(S_1)}
x_ix_j.
$$
Now, using the arithmetic-geometric mean inequality between $x_{i}$
and $x_{j}$, $|2x_ix_j|\leq x_i^2+x_j^2$, it holds that
$-2\sum_{(i,j)\in E(S_1)} x_ix_j\leq\sum_{j=1}^{n_1} d_j^{S_1}x_j^2$
and therefore $x^T L x \leq 2 \sum_{j=1}^{n_1} d_j^{S_1}x_j^2$.
Furthermore, considering $x\in R^i\times \{0\}^{n-i}\subseteq R^n$,
we have in this case that $x^T L x \leq 2 \sum_{j=1}^{i}
d_j^{S_1}x_j^2\leq 2d_i^{S_1}\|x\|^2$. Finally, according to
Courant-Fischer we have that $\mu_i^{S_1}\leq \max_{x\in R^i\times
\{0\}^{n-i}} \frac{x^T L x}{\|x\|^2}\leq
\frac{2d_{i}^{S_{1}}\|x\|^2}{\|x\|^2}=2d_i^{S_1}$ (we have already
mentioned that the degree sequence is set in ascending order, that
is $d_1^{S_1}\leq \ldots\leq d_{n_1}^{S_1}$).



\begin{figure}[h!]
\centering
\includegraphics[width=0.5\textwidth]{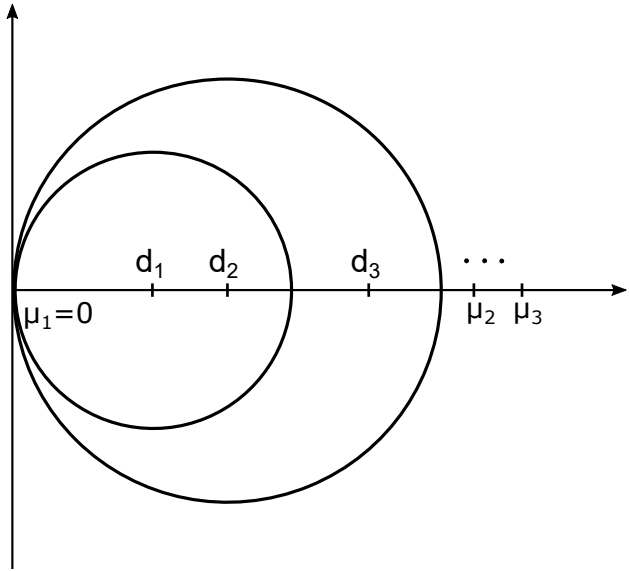}
\caption{Gershgorin disks for Laplacian matrix}
\label{fig:gershgorin}
\end{figure}

The approximations from \cite{sayama2016estimation} are not derived
from rigorous mathematical proofs, but from empirical evidence and
good behavior of estimated eigenvalues and eigenvectors has been
noticed for some types of random graphs. In the following subsection
we propose an estimation of Laplacian spectral decomposition for the
Kronecker product of graphs by using the normalized Laplacian
eigenvectors of factor graphs. We show some differences side by side
(both experimentally and analytically) between these approximations
through the eigenvectors and eigenvalues analysis separately.

\subsection{Estimation of Laplacian spectrum of Kronecker product graph by using the Kronecker product of normalized Laplacian eigenvectors of factor graphs}
\label{subsec:novel_approx}

In this section we propose an alternative approach for estimating the
Laplacian spectrum of the Kronecker product of graphs. The idea
comes from the fact that the normalized Laplacian matrix of the
Kronecker product of graphs can be represented in terms of
normalized Laplacian matrices of factor graphs. Moreover, in some
cases the Kronecker product of the eigenvectors of
$\mathcal{L}_{S_{1}}$ and $\mathcal{L}_{S_{2}}$ gives better
approximation for eigenvectors of $L_{S_{1}\otimes S_{2}}$ than the
Kronecker product of the eigenvectors of $L_{S_{1}}$ and $L_{S_{2}}$. 
Now, we will explain the motivation and assumptions for this novel approach in more detail.

By the definition of the normalized Laplacian and the property
$(A\otimes B)^{-1} = A^{-1}\otimes B^{-1}$, the normalized Laplacian
of the matrix $S_{1}\otimes S_{2}$ can be written in the following
way

\begin{equation}
\notag \mathcal{L}_{S_{1}\otimes S_{2}} = I_{n_{1}}\otimes I_{n_{2}}
- (D_{S_{1}}^{-\frac{1}{2}}\otimes
D_{S_{2}}^{-\frac{1}{2}})(S_{1}\otimes S_{2})
(D_{S_{1}}^{-\frac{1}{2}}\otimes D_{S_{2}}^{-\frac{1}{2}}).
\end{equation}
Using the property of the Kronecker product of matrices, $(A\otimes
B)(C\otimes D) = AC \otimes BD$, we further obtain:
\begin{equation}
\notag \mathcal{L}_{S_{1}\otimes S_{2}} = I_{n_{1}}\otimes I_{n_{2}}
- (D_{S_{1}}^{-\frac{1}{2}} S_{1} D_{S_{1}}^{-\frac{1}{2}})\otimes
(D_{S_{2}}^{-\frac{1}{2}} S_{2} D_{S_{2}}^{-\frac{1}{2}}) =
I_{n_{1}}\otimes I_{n_{2}} - (I_{n_{1}} -
\mathcal{L}_{S_{1}})\otimes (I_{n_{2}} - \mathcal{L}_{S_{2}}).
\end{equation}

Let $\{\lambda_{i}^{S_{1}}\}$ and $\{\lambda_{j}^{S_{2}}\}$ be the
eigenvalues of the matrices $\mathcal{L}_{S_{1}}$ and
$\mathcal{L}_{S_{2}}$, with the corresponding orthonormal
eigenvectors $\{v_{i}^{S_{1}}\}$ and $\{v_{j}^{S_{2}}\}$, where $i =
1, 2, \ldots, n_{1}$ and $j = 1, 2, \ldots, n_{2}$. Denote by
$\Lambda_{S_{1}}$ and $\Lambda_{S_{2}}$ the diagonal matrices whose
diagonal elements are the values $1-\lambda_{i}^{S_{1}}$ and
$1-\lambda_{j}^{S_{2}}$, respectively. Also, $V_{S_{1}}$ and
$V_{S_{2}}$ stand for the square matrices which contain
$v_{i}^{S_{1}}$ and $v_{j}^{S_{2}}$ as column vectors. Using the
spectral decomposition of the matrix $(I_{n_{1}} -
\mathcal{L}_{S_{1}})\otimes (I_{n_{2}} - \mathcal{L}_{S_{2}})$, from
the above equation it follows that

\begin{equation}
\begin{split}
\mathcal{L}_{S_{1}\otimes S_{2}} & = I_{n_{1}}\otimes I_{n_{2}} - (V_{S_{1}}\Lambda_{S_{1}}V_{S_{1}}^{T})\otimes (V_{S_{2}}\Lambda_{S_{2}}V_{S_{2}}^{T})\\
& = I_{n_{1}}\otimes I_{n_{2}} - (V_{S_{1}}\otimes V_{S_{2}})(\Lambda_{S_{1}}\otimes \Lambda_{S_{2}})(V_{S_{1}}\otimes V_{S_{2}})^{T}\\
& = (V_{S_{1}}\otimes V_{S_{2}})(I_{n_{1}}\otimes I_{n_{2}} -
\Lambda_{S_{1}}\otimes \Lambda_{S_{2}})(V_{S_{1}}\otimes
V_{S_{2}})^{T},
\end{split}
\label{eq:spectral_decomposition}
\end{equation}
since $(V_{S_{1}}\otimes V_{S_{2}})(V_{S_{1}}\otimes V_{S_{2}})^{T}
= I_{n_{1}n_{2}}$. This further implies that the normalized
Laplacian matrix of the Kronecker product of graphs have $\{1 -
(1-\lambda_{i}^{S_{1}})(1-\lambda_{j}^{S_{2}})\}$ as eigenvalues and
$\{v_{i}^{S_{1}}\otimes v_{j}^{S_{2}}\}$ as eigenvectors.

Now, put $\Lambda = I_{n_{1}}\otimes I_{n_{2}} -
\Lambda_{S_{1}}\otimes \Lambda_{S_{2}}$ and $D=D_{S_1}\otimes
D_{S_2}$. It is well known that the normalized Laplacian can be
expressed in term of Laplacian matrix as $\mathcal{L} =
D^{-\frac{1}{2}}L D^{-\frac{1}{2}}$. Furthermore, as
$L_{S_{1}\otimes S_{2}}(V_{S_{1}}\otimes V_{S_{2}}) =
D^{\frac{1}{2}} \mathcal{L}_{S_{1}\otimes S_{2}} D^{\frac{1}{2}}
(V_{S_{1}}\otimes V_{S_{2}})$, using the similar assumption like in
the previous subsection that $D_{S_{1}}^{\frac{1}{2}}
V_{S_{1}}\approx V_{S_{1}}D_{S_{1}}^{\frac{1}{2}}$ and
$D_{S_{2}}^{\frac{1}{2}} V_{S_{2}}\approx
V_{S_{2}}D_{S_{2}}^{\frac{1}{2}}$, from
\eqref{eq:spectral_decomposition}, we derive
\begin{equation}
\notag L_{S_{1}\otimes S_{2}}(V_{S_{1}}\otimes V_{S_{2}}) \approx
D^{\frac{1}{2}} \mathcal{L}_{S_{1}\otimes S_{2}} (V_{S_{1}}\otimes
V_{S_{2}}) D^{\frac{1}{2}} = D^{\frac{1}{2}} \Lambda
(V_{S_{1}}\otimes V_{S_{2}}) D^{\frac{1}{2}}.
\end{equation}
Finally, applying the same assumption again we have the following
formula
\begin{equation}
L_{S_{1}\otimes S_{2}}(V_{S_{1}}\otimes V_{S_{2}}) \approx
(D\Lambda)(V_{S_{1}}\otimes V_{S_{2}}). \label{eq:novel_spectrum}
\end{equation}
Inside the first pair of parenthesis of the right-hand side of
\eqref{eq:novel_spectrum} is the diagonal matrix $D\Lambda$ which leads us to a
potential formula for estimating the Laplacian spectrum of the
Kronecker product of graphs, while for the corresponding
eigenvectors we could use eigenvectors of the normalized Laplacian
matrix of the Kronecker product of graphs. Therefore, a
potential formula for estimating the Laplacian spectra of the
Kronecker product of graphs

\begin{equation}
\begin{split}
\mu_{ij} & = \{(1-(1-\lambda_{i}^{S_{1}})(1-\lambda_{j}^{S_{2}}))d_{i}^{S_{1}}d_{j}^{S_{2}}\} \\
& = \{(\lambda_{i}^{S_{1}} + \lambda_{j}^{S_{2}} -
\lambda_{i}^{S_{1}}\lambda_{j}^{S_{2}})d_{i}^{S_{1}}d_{j}^{S_{2}}\},
\end{split}
\label{eq:novel_spectrum2}
\end{equation}

\noindent which are obviously nonnegative. Moreover, the first
eigenvalue is always matched at 0 in both actual and estimated
spectra, because \eqref{eq:novel_spectrum2} guarantees this.

Similarly as in \cite{sayama2016estimation} this approximation shares the property that
the orderings of $v_{i}^{S_{1}}$ and $v_{j}^{S_{2}}$ in $V_{S_{1}}$
and $V_{S_{2}}$ (and hence $\lambda_{i}^{S_{1}}$ and
$\lambda_{j}^{S_{2}}$) are independent of vertex orderings in
$D_{S_{1}}$ and  $D_{S_{2}}$, respectively, and it would be
impractical to try to find true optimal orderings. The following
five heuristic methods that use only degrees and eigenvalues of
factor graphs were tested: \textit{uncorrelated ordering},
\textit{correlated ordering}, \textit{correlated ordering with
randomization}, \textit{anti-correlated ordering} and
\textit{anti-correlated ordering with randomization}. In each
method, it is assumed that the degree sequences ($d_{i}^{S_{1}}$ and
$d_{j}^{S_{2}}$) are already sorted in ascending order, while the
orders of eigenvalues ($\lambda_{i}^{S_{1}}$ and
$\lambda_{j}^{S_{2}}$) are altered differently. The most effective
ordering methods turned out to be correlated ordering
($\lambda_{i}^{S_{1}}$ and $\lambda_{j}^{S_{2}}$ are sorted in
ascending order), as it was obtained for approximation spectrum \cite{sayama2016estimation}.

\section{Estimated eigenvalues and eigenvectors evaluation}
\label{sec:spectra_experiments} 

In this section we report a behavior
of the estimated eigenvalues and eigenvectors, from the presented
approximations, compared to the original
ones with regard to the different types of graphs and different edge
density levels. With these experiments we aim to address the
following:

\begin{itemize}
\item We will show how close the estimated to the original eigenvectors of Laplacian of the Kronecker product of graphs are for these approximations. In order to do that we measure the distribution of vector correlation coefficients between $v_{i}^{S_{1}}\otimes v_{j}^{S_{2}}$ and $L_{S_{1}\otimes S_{2}} (v_{i}^{S_{1}}\otimes v_{j}^{S_{2}})$ as it was done for the eigenvectors $w_{i}^{S_{1}}\otimes w_{j}^{S_{2}}$ in \cite{sayama2016estimation}. In the rest of the section, we give an empirical and some theoretical evidence that the eigenvectors $v_{i}^{S_{1}}\otimes v_{j}^{S_{2}}$ can be also used as an approximation for the eigenvectors of $L_{S_{1}\otimes S_{2}}$.

\item We will show how close estimated to the original eigenvalues of Laplacian of the Kronecker product of graphs are for both approximations. Based on the corresponding estimated spectra~\eqref{eq:sayama_spectrum} and~\eqref{eq:novel_spectrum2}, the difference between estimated and original spectra is reported in terms of a distribution of percentage errors between them. Both approximations produced reasonable estimations of Laplacian spectra with percentage errors confined within a $\pm$10\% range for most eigenvalues.  This error value holds for the sparse graphs. It can be noticed that this error is even smaller for the denser graphs, i. e. about $\pm$5\% and $\pm$2\% when the edge density percentages are 30\% and 65\%, respectively. We also noticed that the median of the percentage errors of our estimated Laplacian spectrum are more stable than in the case of spectrum proposed by Sayama. 
Moreover, we give a theoretical explanation of why the percentage errors of the approximated eigenvalues that correspond to $v_i^{S_1}\otimes v_j^{S_2}$ for the
random graphs become more accurate to the real expected values when
the network grows or the edge density level increases.

\end{itemize}

Experiments are performed on three types of graphs:
Erd\H{o}s-R\'enyi, Barab\'asi-Albert and Watts-Strogatz, while the
edge density percentage is varied over 10\%, 30\%, and 65\%. For the
orders of graphs $G$ and $H$ denoted by $n_{1}$ and $n_{2}$,
respectively, we conduct experiments three times depending on the
orders of graphs $(n_{1}, n_{2})\in \{(30, 50),\- (50, 100),\- (100,
200)\}$.

\subsection{Erd\H{o}s-R\'enyi and Watts-Strogatz graphs}
\label{subsec:er_ws_vectors} Here we describe the behavior of
estimated eigenvectors and eigenvalues for both classes of graphs,
Erd\H{o}s-R\'enyi and Watts-Strogatz, since their vector correlation
coefficients and distributions of percentage errors of the estimated
eigenvalues behave similarly for the same experimental setup. We
also noted a bit smaller errors in the case of Watts-Strogatz than
for Erd\H{o}s-R\'enyi random networks. For both types of graphs we
find that the distribution of correlation coefficients between the vectors $v_{i}^{S_{1}}\otimes
v_{j}^{S_{2}}$  and $L_{S_{1}\otimes S_{2}} (v_{i}^{S_{1}}\otimes
v_{j}^{S_{2}})$, and the vectors $w_{i}^{S_{1}}\otimes
w_{j}^{S_{2}}$ and $L_{S_{1}\otimes S_{2}} (w_{i}^{S_{1}}\otimes
w_{j}^{S_{2}})$  behave very similar to the corresponding values of
the eigenvectors $v_{i}^{S_{1}}\otimes v_{j}^{S_{2}}$ and
$w_{i}^{S_{1}}\otimes w_{j}^{S_{2}}$, when the edge density grows. Further in the paper by a term the correlation coefficients corresponding to the arbitrary eigenvectors $x^{S}_{i}$ of the graph $S$, we will mean the correlation coefficients between the vectors $x^{S}_{i}$ and $L_{S}(x^{S}_{i})$.
Also, we noticed that the shape of the percentage error distribution
across these two network topologies is more consistent (without
sudden jumps) for the estimated spectrum corresponding to the
eigenvectors $v_{i}^{S_{1}}\otimes v_{j}^{S_{2}}$, than for the
estimated spectrum corresponding to the eigenvectors
$w_{i}^{S_{1}}\otimes w_{j}^{S_{2}}$. First, 
we present experimental and theoretical results for the estimated eigenvectors and eigenvalues of 
Erd\H{o}s-R\'enyi random networks.

\subsubsection{Experimental results for eigenvectors estimation}
It can be immediately seen that $w_1^{S_1}\otimes w_1^{S_2}$
coincides with an eigenvector of $L_{S_1\otimes S_2}$, where
$w_1^{S_1}$ and $w_1^{S_2}$ are the eigenvectors of $L_{S_1}$ and
$L_{S_2}$, respectively, that correspond to the eigenvalue $0$.
Indeed, since it is well-known that $w_1^{S_1}=1_{S_1}$,
$w_1^{S_2}=1_{S_2}$, $D_{S_1}1_{S_1}= A_{S_1}1_{S_1}$ and
$D_{S_2}1_{S_2}= A_{S_2}1_{S_2}$ we obtain that
\begin{eqnarray*}
L_{S_1\otimes S_2}(w_1^{S_1}\otimes w_1^{S_2})&=& (D_{S_1}\otimes
D_{S_2}-A_{S_1}\otimes A_{S_2})(1_{S_1}\otimes 1_{S_2})\\
&=&D_{S_1}1_{S_1}\otimes D_{S_2}1_{S_2}-A_{S_1}1_{S_1}\otimes
A_{S_2}1_{S_2}=0.
\end{eqnarray*}

We can similarly show that 
$\mathcal{L}_{S_1}\cdot D_{S_1}^{\frac{1}{2}}1_{S_1}=0$ and
$\mathcal{L}_{S_2}\cdot D_{S_2}^{\frac{1}{2}}1_{S_2}=0$. Indeed, we have that
\begin{eqnarray*}
\mathcal{L}_{S_1}\cdot D_{S_1}^{\frac{1}{2}}1_{S_1}&=& (I_{S_1}-D_{S_1}^{-\frac{1}{2}}A_{S_1}D_{S_1}^{-\frac{1}{2}})(D_{S_1}^{\frac{1}{2}}1_{S_1})\\
&=&D_{S_1}^{\frac{1}{2}}1_{S_1}-D_{S_1}^{-\frac{1}{2}}A_{S_1}1_{S_1}\\
&=&D_{S_1}^{\frac{1}{2}}1_{S_1}-D_{S_1}^{-\frac{1}{2}}D_{S_1}1_{S_1}=0.
\end{eqnarray*}
Therefore, for $v_1^{S_1}= D_{S_1}^{\frac{1}{2}}1_{S_1}$ and
$v_1^{S_2}=D_{S_2}^{\frac{1}{2}}1_{S_2}$ it does not hold that
$v_1^{S_1}\otimes v_1^{S_2}$ is an eigenvector of $L_{S_1\otimes
S_2}$. Nevertheless, we omit the examination of the coefficient
correlations that correspond to the vectors $w_1^{S_1}\otimes
w_1^{S_2}$ and $v_1^{S_1}\otimes v_1^{S_2}$  in the following
experimental setup, as we can explicitly calculate the first
eigenvector of $L_{S_1\otimes S_2}$. Moreover, we can not claim in the
general case (for example, when the graphs $S_1$ and $S_2$ are not
regular) that any other approximation vector $w_i^{S_1}\otimes
w_j^{S_2}$ or $v_i^{S_1}\otimes v_j^{S_2}$ coincides with the actual
eigenvector of $L_{S_1\otimes S_2}$.

The first set of experiments was performed for the eigenvectors
comparison of two proposed approximations on the sparse graphs,
that is, we repeat the same experiment as in
\cite{sayama2016estimation} where two Erd\H{o}s-R\'enyi random
networks have 50 vertices (100 edges) and 30 vertices (90 edges),
respectively. It can be easily seen that the edge densities of these
graphs are around 10\%. In Figure~\ref{fig:er_eigenvectors_30x50}
(left panel), one can see the smoothed probability density functions
of vector correlation coefficients between the mentioned vectors
drawn from five independent numerical results. Using the mentioned
parameters on the estimated Laplacian eigenvectors
$w_{i}^{S_{1}}\otimes w_{j}^{S_{2}}$, the correlation coefficients
are above 0.8 in most of the cases, while the peaks are achieved
above 0.9 (green solid lines). For the same graphs, the correlation
coefficients of the eigenvectors $v_{i}^{S_{1}}\otimes
v_{j}^{S_{2}}$ are above 0.7 in most of the cases, while the peaks
are achieved between 0.8 and 0.9 (blue solid lines). Furthermore, it
can be seen in Figure~\ref{fig:er_eigenvectors_30x50} that the
correlation coefficients for the eigenvectors $v_{i}^{S_{1}}\otimes
v_{j}^{S_{2}}$ increase and their graphs shrink to the right (toward
the value of 1) when the edge density level increases (middle and
right panels show the graphics for the edge density levels of 30\%
and 65\%, respectively).

It can be noticed that the correlation coefficients corresponding to the eigenvectors $v_{i}^{S_{1}}\otimes v_{j}^{S_{2}}$ and $w_{i}^{S_{1}}\otimes w_{j}^{S_{2}}$ are symmetrically distributed around the peak and their smoothed probability density functions of vector correlation coefficients look like a probability density function of the normal distribution. Indeed, according to the Pearson's chi-squared test (as a test of goodness of fit) we obtain that most of the correlation coefficients corresponding to the eigenvectors $v_{i}^{S_{1}}\otimes v_{j}^{S_{2}}$ and $w_{i}^{S_{1}}\otimes w_{j}^{S_{2}}$ belong to a fitted normal distribution for the $p$-value of 0.05. When the edge density levels are 10\% for both graphs, 1380 out of 1499 correlation coefficients corresponding to the eigenvectors $v_{i}^{S_{1}}\otimes v_{j}^{S_{2}}$ belong to a fitted normal distribution. For the edge density levels of 30\% and 65\%, 1471 and 1496 out of 1499 correlation coefficients belong to a fitted normal distribution, respectively. On the other hand, when the edge density levels are 10\% for both graphs, 1488 out of 1499 correlation coefficients corresponding to the eigenvectors $w_{i}^{S_{1}}\otimes w_{j}^{S_{2}}$ belong to a fitted normal distribution. For the edge density levels of 30\% and 65\%, 1476 and 1486 out of 1499 correlation coefficients belong to a fitted normal distribution, respectively. Moreover, a similar conclusion can be reached for both vector products when the Erd\H{o}s-R\'enyi random networks have 50 and 100 vertices, as well as 100 and 200 vertices.

\begin{figure}
\centering
\includegraphics[width=\textwidth]{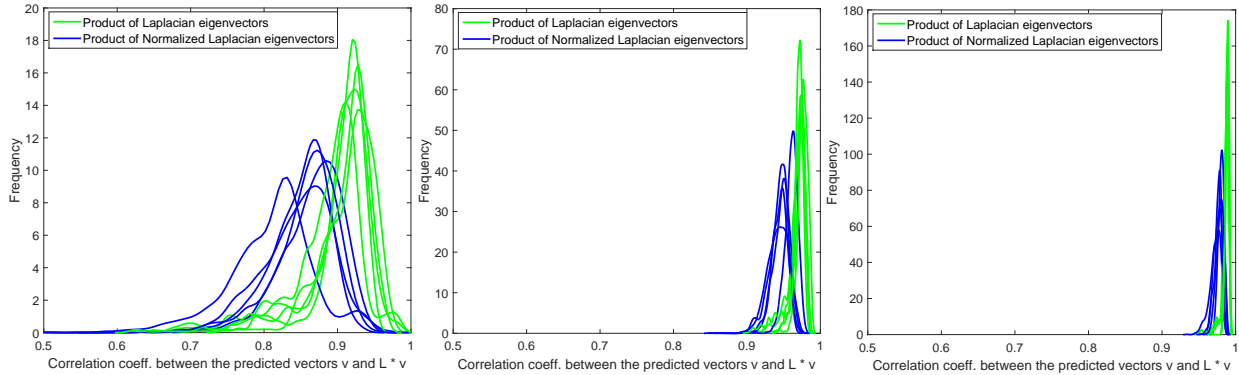} 
\caption{Smoothed probability density functions of vector
correlation coefficients between $w_{i}^{S_{1}}\otimes
w_{j}^{S_{2}}$ and $L_{S_{1}\otimes S_{2}}(w_{i}^{S_{1}}\otimes
w_{j}^{S_{2}})$ are represented by using a green solid line, while
between $v_{i}^{S_{1}}\otimes v_{j}^{S_{2}}$ and $L_{S_{1}\otimes
S_{2}}(v_{i}^{S_{1}}\otimes v_{j}^{S_{2}})$ are represented using a
blue solid line. Probability density functions are drawn for each of
the edge density level 10\%, 30\% and 65\%, respectively, for the
Erd\H{o}s-R\'enyi random graphs with 50 and 30 vertices.}
\label{fig:er_eigenvectors_30x50}
\end{figure}

\subsubsection{Theoretical results for eigenvectors estimation}

Given the performed experiments it can be noticed that some of the
values of correlation coefficients that correspond to the
approximation vectors $w_i^{S_1}\otimes w_j^{S_2}$ are mutually
equal. Indeed, we can explicitly determine the values of correlation
coefficients related to the vectors $w_1^{S_1}\otimes w_j^{S_2}$ and
$w_i^{S_1}\otimes w_1^{S_2}$, for $2\leq j\leq n_2$ and $2\leq i\leq
n_1$, and show that they do not depend on the vectors $w_j^{S_2}$
and $w_i^{S_1}$. In the following text, we prove that the
correlation coefficients related to the vectors $w_1^{S_1}\otimes
w_j^{S_2}$ and $w_i^{S_1}\otimes w_1^{S_2}$ only depend on the
vertex degrees of $S_1$ and $S_2$, respectively.

\begin{theorem}
\label{thm:coeff cor calculation}
  The correlation coefficients $r_{1,j}$ ($2\leq j\leq n_2$) corresponding to the vectors $L_{S_1\otimes S_2}(1_{S_1}\otimes w_j^{S_2})$ and  $1_{S_1}\otimes w_j^{S_2}$ are equal to
  $$
  \frac{\frac{d_1^{S_1}+\cdots+d_{n_1}^{S_1}}{n_1}}{\sqrt{\frac{{d_1^{S_1
}}^2+\cdots+{d_{n_1}^{S_1}}^2}{n_1}}}.
  $$
\end{theorem}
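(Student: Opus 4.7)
The plan is to compute the two vectors explicitly, reduce the Pearson correlation to a cosine similarity (thanks to both vectors having sample mean zero), and then exploit the multiplicative structure of the Kronecker product to factor everything into sums over $S_1$ alone.

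First I would apply the identity $L_{S_1\otimes S_2} = L_{S_1}\otimes D_{S_2} + D_{S_1}\otimes L_{S_2} - L_{S_1}\otimes L_{S_2}$ established earlier in the paper to the vector $1_{S_1}\otimes w_j^{S_2}$. Using $L_{S_1}1_{S_1}=0$, $D_{S_1}1_{S_1}=d^{S_1}:=(d_1^{S_1},\ldots,d_{n_1}^{S_1})^T$, and $L_{S_2}w_j^{S_2}=\mu_j^{S_2}w_j^{S_2}$, the first and third Kronecker summands vanish and one gets
\[
L_{S_1\otimes S_2}(1_{S_1}\otimes w_j^{S_2}) \;=\; \mu_j^{S_2}\,(d^{S_1}\otimes w_j^{S_2}).
\]
Since Pearson correlation is invariant under nonzero scalar multiplication, I can drop the factor $\mu_j^{S_2}$ and work with the pair $(d^{S_1}\otimes w_j^{S_2},\,1_{S_1}\otimes w_j^{S_2})$.

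Next I would exploit the orthogonality of Laplacian eigenvectors: because $j\ge 2$, the vector $w_j^{S_2}$ is orthogonal to $1_{S_2}$, hence $\sum_k (w_j^{S_2})_k = 0$. This makes the component-sum of both $d^{S_1}\otimes w_j^{S_2}$ and $1_{S_1}\otimes w_j^{S_2}$ equal to zero, so both sample means vanish. Consequently, the Pearson correlation coefficient reduces to the cosine of the angle between the two vectors. It only remains to evaluate three inner products, and this is exactly where the Kronecker product structure pays off, since $\langle a\otimes c,\,b\otimes c\rangle = \langle a,b\rangle\,\|c\|^2$.

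I would then compute, using $\|w_j^{S_2}\|^2=1$:
\[
\langle d^{S_1}\otimes w_j^{S_2},\,1_{S_1}\otimes w_j^{S_2}\rangle = \textstyle\sum_i d_i^{S_1},\qquad \|d^{S_1}\otimes w_j^{S_2}\|^2 = \textstyle\sum_i (d_i^{S_1})^2,\qquad \|1_{S_1}\otimes w_j^{S_2}\|^2 = n_1.
\]
Substituting into $r_{1,j}=\frac{\sum_i d_i^{S_1}}{\sqrt{n_1\sum_i(d_i^{S_1})^2}}$ and pulling a factor $1/n_1$ into numerator and denominator delivers the claimed expression. The computation is largely routine; the only conceptual step is recognizing that the $j\ge 2$ hypothesis forces the mean-centering to be trivial, after which the tensor structure makes every remaining sum split as a product and the $w_j^{S_2}$-dependence cancels out, leaving a quantity determined solely by the degree sequence of $S_1$.
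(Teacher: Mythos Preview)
Your proof is correct and follows essentially the same route as the paper: both establish $L_{S_1\otimes S_2}(1_{S_1}\otimes w_j^{S_2})=\mu_j^{S_2}\,(d^{S_1}\otimes w_j^{S_2})$ and then evaluate the resulting cosine by exploiting the multiplicativity of inner products under Kronecker products. The only differences are cosmetic---the paper reaches the colinearity via $L_{S_1\otimes S_2}=D_{S_1}\otimes D_{S_2}-A_{S_1}\otimes A_{S_2}$ together with $A_{S_1}1_{S_1}=D_{S_1}1_{S_1}$, whereas you use the three-term Laplacian decomposition and $L_{S_1}1_{S_1}=0$; and your explicit remark that $j\ge 2$ forces zero sample means (so Pearson correlation coincides with cosine) is a point the paper passes over in silence by computing the cosine directly.
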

\begin{proof}
Using the fact that $D_{S_1}1_{S_1}=A_{S_1}1_{S_1}=[d_1^{S_1},\ldots,
d_{n_1}^{S_1}]^T$, we show that the vectors $L_{S_1\otimes
S_2}(w_1^{S_1}\otimes w_j^{S_2})$ and $[d_1^{S_1},\ldots,
d_{n_1}^{S_1}]^T\otimes w_j^{S_2}$ are colinear
\begin{eqnarray}
\nonumber
L_{S_1\otimes S_2}(1_{S_1}\otimes w_j^{S_2})&=& (D_{S_1}\otimes D_{S_2}-A_{S_1}\otimes A_{S_2}) (1_{S_1}\otimes w_j^{S_2})\\ \nonumber 
&=& (D_{S_1}1_{S_1})\otimes (D_{S_2}w_j^{S_2})-
(A_{S_1}1_{S_1})\otimes (A_{S_2}w_j^{S_2}) \\ \nonumber &=&
[d_1^{S_1},\ldots, d_{n_1}^{S_1}]^T \otimes (D_{S_2}-A_{S_2})
w_j^{S_2} \\ 
&=&\mu_j^{S_2}[d_1^{S_1},\ldots,
d_{n_1}^{S_1}]^T\otimes w_j^{S_2}.\label{eq:colinearity}\\ \nonumber
\end{eqnarray}

According to (\ref{eq:colinearity}) we have the following chain of equalities
\begin{eqnarray*}
r_{1,j}&=&\frac{\langle L_{S_1\otimes S_2}(1_{S_1}\otimes
w_j^{S_2}), 1_{S_1}\otimes w_j^{S_2}
\rangle}{\|\mu_j^{S_2}[d_1^{S_1},\ldots ,d_{n_1}^{S_1}]^T\otimes w_j^{S_2}\|\cdot \| 1_{S_1}\otimes w_j^{S_2}\|}\\
&=&\frac{(\mu_j^{S_2}[d_1^{S_1},\ldots, d_{n_1}^{S_1}]\otimes
{w_j^{S_2}}^ T)\cdot (1_{S_1}\otimes
w_j^{S_2})}{\mu_j^{S_2}\|[d_1^{S_1},\ldots, d_{n_1}^{S_1}]\|\cdot
\|1_{S_1}\|\cdot \|w_j^{S_2}\|^2}\\
&=&\frac{(\mu_j^{S_2}[d_1^{S_1},\ldots, d_{n_1}^{S_1}]
1_{S_1})\otimes
\|w_j^{S_2}\|^2}{\mu_j^{S_2}\sqrt{n_1}\|[d_1^{S_1},\ldots, d_{n_1}^{S_1}]\|\cdot \|w_j^{S_2}\|^2}\\
&=&\frac{\frac{d_1^{S_1}+\cdots+d_{n_1}^{S_1}}{n_1}}{\sqrt{\frac{{d_1^{S_1
}}^2+\cdots+{d_{n_1}^{S_1}}^2}{n_1}}}.
\end{eqnarray*}
\end{proof}

We see that $r_{1,j}=1$ if and only if the arithmetic mean of the
vertex degrees of $S_1$ is equal to the root mean square of the same
elements and it is well-known that it is true if and only if $S_1$
is a regular graph. On the other hand, the values of $r_{1,j}$ can
be very low in the cases where there is a large gap between the
lowest and highest vertex degrees in the graphic sequence of $S_1$,
$1\leq d_1^{S_1}\leq \ldots \leq d_{n_1}^{S_1}\leq n_1-1$. For
example, considering the complete bipartite graph $S_1=K_{1,n_1-1}$
and calculating the arithmetic mean and the root mean square of the
vertex degrees which are $\frac{2n_1-2}{n_1}$ and $\sqrt{n_1-1}$,
respectively, we can deduce that
$r_{1,j}=\frac{2\sqrt{n_1-1}}{n_1}\rightarrow 0$, when $n_1\rightarrow
\infty$. However, if the sizes of the partition sets in a bipartite
graph become more equal (tend to $n_1/2$) then the coefficient
$r_{1,j}\rightarrow 1$ (for the illustration we can take
$S_1=K_{2,n_1-2}$ and obtain that
$r_{1,j}=\frac{2\sqrt{2n_1-4}}{n_1}>\frac{2\sqrt{n_1-1}}{n_1}$). In
addition, it can be noticed that the correlation coefficients do not
decrease with the increase in the number of different vertex degrees
in $S_1$. Indeed, if we consider the graph $S_1$ with an even order
$n_1=2k+2$ and the graphic sequence
$1,2,\ldots,k,k+1,k+1,k+2,\ldots,2k+1$, it can be determined that
the arithmetic mean and the root mean square are $k+1$ and
$\sqrt{\frac{(2k+1)(4k+3)+3(k+1)}{6}}$, respectively. Therefore, in
this case we obtain high correlation coefficients
$r_{1,j}=\frac{1}{\sqrt{\frac{4}{3}-\frac{1}{2(k+1)}+\frac{1}{6(k+1)^2}}}\rightarrow
\frac{\sqrt{3}}{2}$, $k\rightarrow \infty$. 

However, since we have
obtained correlation coefficients $r_{1,j}$ using a certain number
of synthetic networks produced by the Erd\H{o}s-R\'{e}nyi model, in the
following text we theoretically discuss about the expected values of
the correlation coefficients $r_{1,j}$, using the following auxiliary result.

\begin{prop}[\cite{brockwell} pp. 211]
\label{Brockwell} Suppose that $X_n$ is $AN(\mu, c_n^2\Sigma)$ where
$\Sigma$ is a symmetric nonnegative definite matrix and
$c_n\rightarrow 0$ as $n\rightarrow \infty$. If $g(X) = (g_1 (X),
\ldots ,g_m(X))'$ is a mapping from $R^k$ into $R^m$ such that each
$g_i$ is continuously differentiable in a neighborhood of $\mu$, and
if $D\Sigma D'$ has all of its diagonal elements non-zero, where $D$
is the $m \times k$ matrix $[(\frac{\partial g_i}{\partial
x_j})(\mu)]$, then
$$
g(X_n) {\mbox\ is\ } AN (g(\mu), c_n^2 D\Sigma D').
$$
\end{prop}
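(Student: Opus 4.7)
The plan is to prove this by the standard multivariate delta method: perform a first-order Taylor expansion of $g$ about $\mu$, show that the linear term carries the Gaussian limit via continuous-mapping / affine-transformation properties of the normal law, and show that the quadratic remainder is negligible once rescaled by $c_n^{-1}$. The hypothesis $c_n \to 0$ combined with $X_n$ being $AN(\mu, c_n^2\Sigma)$ will give us the consistency $X_n \to \mu$ in probability that is needed to control the remainder.

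First I would unpack the assumption: $X_n$ being $AN(\mu, c_n^2\Sigma)$ means $c_n^{-1}(X_n - \mu) \Rightarrow Z$ where $Z \sim N(0, \Sigma)$. Since $c_n \to 0$, multiplying by $c_n$ and applying Slutsky gives $X_n - \mu \to 0$ in probability, so $X_n \to \mu$ in probability. Next, because each $g_i$ is continuously differentiable in a neighborhood $U$ of $\mu$, a multivariate mean-value / Taylor expansion yields, for $X_n \in U$,
\begin{equation}
\notag
g(X_n) - g(\mu) = D(X_n - \mu) + R_n,
\end{equation}
where the remainder satisfies $\|R_n\| \le \varepsilon(X_n)\,\|X_n - \mu\|$ with $\varepsilon(x) \to 0$ as $x \to \mu$, by uniform continuity of the partial derivatives on a compact neighborhood of $\mu$.

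I would then rescale by $c_n^{-1}$ to obtain
\begin{equation}
\notag
c_n^{-1}\bigl(g(X_n) - g(\mu)\bigr) = D\,c_n^{-1}(X_n - \mu) + c_n^{-1}R_n.
\end{equation}
For the linear term, since $c_n^{-1}(X_n - \mu) \Rightarrow N(0, \Sigma)$, the continuous linear map $x \mapsto Dx$ gives $D\,c_n^{-1}(X_n - \mu) \Rightarrow N(0, D\Sigma D')$. For the remainder, I would bound
\begin{equation}
\notag
\|c_n^{-1}R_n\| \le \varepsilon(X_n)\,\|c_n^{-1}(X_n - \mu)\|.
\end{equation}
The factor $\|c_n^{-1}(X_n - \mu)\|$ is $O_P(1)$ by the asymptotic normality, and $\varepsilon(X_n) \to 0$ in probability by continuity of $\varepsilon$ at $\mu$ together with $X_n \to \mu$ in probability. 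The product is thus $o_P(1)$, so by Slutsky's theorem
\begin{equation}
\notag
c_n^{-1}\bigl(g(X_n) - g(\mu)\bigr) \Rightarrow N(0, D\Sigma D'),
\end{equation}
which is exactly the conclusion $g(X_n)$ is $AN(g(\mu), c_n^2 D\Sigma D')$. The hypothesis that the diagonal entries of $D\Sigma D'$ are nonzero is used only to ensure the limiting distribution is genuinely Gaussian in each coordinate (so the $AN$ notation is well-defined componentwise).

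The main obstacle is the careful handling of the remainder: one must argue that a Taylor-type bound with an error function $\varepsilon(x) \to 0$ really holds uniformly on a shrinking neighborhood, and then combine this with the $O_P$-boundedness of $c_n^{-1}(X_n - \mu)$ via Slutsky rather than via moment bounds (which are not assumed). This is a routine but essential application of the continuous-mapping theorem; once it is in place, the rest of the argument is an affine transformation of a Gaussian limit.
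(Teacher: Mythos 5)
The paper does not prove this proposition at all: it is quoted verbatim as an auxiliary result from Brockwell and Davis (p.~211), i.e., the multivariate delta method, and is used as a black box in Theorems~\ref{thm:corrcoeff_expected} and~\ref{thm:second inequality}. Your argument is the standard textbook proof of that result and is correct: asymptotic normality together with $c_n\to 0$ gives $X_n\to\mu$ in probability via Slutsky; continuous differentiability of $g$ near $\mu$ justifies the first-order expansion $g(X_n)-g(\mu)=D(X_n-\mu)+R_n$ with $\|R_n\|\le\varepsilon(X_n)\|X_n-\mu\|$ and $\varepsilon(x)\to 0$ as $x\to\mu$; the linear term converges to $N(0,D\Sigma D')$ by the continuous-mapping theorem applied to $x\mapsto Dx$; and the rescaled remainder $c_n^{-1}R_n$ is $o_P(1)$ as the product of an $o_P(1)$ factor and an $O_P(1)$ factor, so Slutsky closes the argument. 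Your reading of the hypothesis that the diagonal entries of $D\Sigma D'$ be nonzero is also the intended one: it only guarantees that the componentwise $AN$ statement is nondegenerate and hence well defined. In short, there is no proof in the paper to compare against; your proposal supplies a correct, self-contained proof where the authors supply a citation.
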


\begin{theorem}
If $S_1$ is Erd\H{o}s-R\'{e}nyi graph model, then the expected value of
the correlation coefficient $r_{1,j}$ corresponding to the vectors
$L_{S_1\otimes S_2}(1_{S_1}\otimes w_j^{S_2})$ and  $1_{S_1}\otimes
w_j^{S_2}$ tends to
\begin{eqnarray}
\label{eq:coeffcorr} \sqrt{\frac{(n_1-1)p}{1-p+(n_1-1)p}},
\end{eqnarray}
as $n_1\rightarrow \infty$. \label{thm:corrcoeff_expected}
\end{theorem}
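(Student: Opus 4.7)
The plan is to invoke the delta method (Proposition~\ref{Brockwell}) on the closed-form expression for $r_{1,j}$ obtained in Theorem~\ref{thm:coeff cor calculation}. Writing $\bar d = \tfrac{1}{n_1}\sum_i d_i^{S_1}$ and $\overline{d^2} = \tfrac{1}{n_1}\sum_i (d_i^{S_1})^2$, that theorem expresses $r_{1,j} = g(\bar d,\overline{d^2})$ with $g(x,y) = x/\sqrt{y}$, a map that is continuously differentiable wherever $y>0$. The strategy is therefore to identify the population moments of $(\bar d, \overline{d^2})$, verify joint asymptotic normality of the pair, and push through $g$.

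First, exploiting that each $d_i^{S_1}$ is $\mathrm{Bin}(n_1-1,p)$, one obtains
\[ \mathbb{E}[\bar d] = (n_1-1)p, \qquad \mathbb{E}[\overline{d^2}] = (n_1-1)p(1-p) + \bigl((n_1-1)p\bigr)^2 = (n_1-1)p\bigl(1-p+(n_1-1)p\bigr). \]
Substituting these two values straight into $g$ reproduces exactly the target $\sqrt{(n_1-1)p/(1-p+(n_1-1)p)}$, so the theorem reduces to the statement that the sample moments may be replaced by their expectations inside $g$.

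The main technical step is joint asymptotic normality of $(\bar d, \overline{d^2})$ around its mean. In $G(n_1,p)$ the degrees are only weakly dependent: $d_i^{S_1}$ and $d_k^{S_1}$ share only the edge indicator $X_{ik}$, so $\mathrm{Cov}(d_i^{S_1}, d_k^{S_1}) = p(1-p)$, which yields $\mathrm{Var}(\bar d) = \tfrac{2(n_1-1)p(1-p)}{n_1}$ and, by an analogous but more involved count, a coefficient of variation of order $n_1^{-1/2}$ for $\overline{d^2}$. Joint asymptotic normality at rate $c_n = n_1^{-1/2}$ then follows from the CLT applied to the $\binom{n_1}{2}$ i.i.d.\ edge indicators $X_{ij}$, e.g.\ via a Cram\'er--Wold reduction to a scalar martingale CLT on the edge-exposure filtration. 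This is where the main obstacle lies: the dependence among degrees, though mild, forces us to leave the i.i.d.\ setting and to express both sample moments as functions of the independent $X_{ij}$ before invoking the CLT and matching the hypotheses of Proposition~\ref{Brockwell}.

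Finally, applying Proposition~\ref{Brockwell} to $g$ at the mean (after rescaling each coordinate by its expectation so that the limit point becomes the fixed $\mu = (1,1)$ where $g$ is smooth) delivers convergence in probability of $r_{1,j}$ to $\sqrt{(n_1-1)p/(1-p+(n_1-1)p)}$. Since $0 \leq r_{1,j} \leq 1$ by the AM--QM inequality, bounded convergence lifts this to $\mathbb{E}[r_{1,j}]$, which is the required statement.
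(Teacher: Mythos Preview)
Your proposal is correct and follows essentially the same approach as the paper's proof: compute the binomial moments $\mathbb{E}[\bar d]=(n_1-1)p$ and $\mathbb{E}[\overline{d^2}]=(n_1-1)p(1-p)+(n_1-1)^2p^2$, establish joint asymptotic normality of $(\bar d,\overline{d^2})$, and apply the delta method (Proposition~\ref{Brockwell}) with $g(y_1,y_2)=y_1/\sqrt{y_2}$ to conclude. You are in fact more careful than the paper on two points: you flag the weak dependence among the degrees and route the CLT through the i.i.d.\ edge indicators (the paper invokes the CLT directly as if the $d_i^{S_1}$ were independent), and you justify the passage from asymptotic normality to convergence of $\mathbb{E}[r_{1,j}]$ via the bound $0\le r_{1,j}\le 1$ and dominated convergence (the paper simply asserts that $E(g(X_2))\to\mu_3$).
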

\begin{proof}
Since the distribution of the degree of any particular vertex of the
Erd\H{o}s-R\'{e}nyi graph $S_1=G(n_1,p)$ is binomial, that is
$P(d_i^{S_1}=k)={{n_1-1}\choose k}p^k (1-p)^{n_1-k-1}$, and the fact
that the expected value of any vertex degree is equal to the
expected value of the arithmetic mean of degrees
$Y_1=\frac{d_1^{S_1}+\ldots+d_{n_1}^{S_1}}{n_1}$, we conclude that
 $E(Y_1)=(n_1-1)p$ 
. Furthermore, as $n_1\rightarrow \infty$, according to the central
limit theorem $Y_1$ has asymptotic normal distribution
$AN(\mu_1,\frac{\sigma_1^2}{n_1})$, where $\mu_1=E(Y_1)$ and
$\sigma_1^2=D(d_i^{S_1})=(n_1-1)p(1-p)$ ($E$ and $D$ are usual
notation for expected value and dispersion, respectively).
Similarly, as ${d_i^{S_1}}^2$, $1\leq i\leq n_1$, have the same
distribution, we deduce that $Y_2= \frac{{d_1^{S_1
}}^2+\cdots+{d_{n_1}^{S_1}}^2}{n_1}$ has asymptotic normal
distribution $AN(\mu_2,\frac{\sigma_2^2}{n_1})$, where
$\mu_2=E(Y_2)=E({d_i^{S_1}}^2)$ and $\sigma_2^2=D({d_i^{S_1}}^2)$.
On the other hand, given that
$E({d_i^{S_1}}^2)=D(d_i^{S_1})+E({d_i^{S_1}})^2$ , we have that
$\mu_2=(n_1-1)p(1-p)+(n_1-1)^2p^2$. Considering the two dimensional
variable $X_2=[Y_1,Y_2]$ it can be concluded that its asymptotic
normal distribution is $AN([\mu_1,\mu_2]',c_n^2\Sigma)$, where
$\Sigma$ represents a nonnegative definite symmetric matrix. Define
$g(y_1,y_2)=\frac{y_1}{\sqrt{y_2}}$ which is a continuously
differentiable function. Finally, using Proposition \ref{Brockwell}
we conclude that $g(X_2)$ has asymptotic normal distribution
$AN(\mu_3,c_n^2 D\Sigma D')$, where
$\mu_3=g(\mu_1,\mu_2)=\frac{\mu_1}{\sqrt{\mu_2}}$. Therefore,
the expected value of the coefficient correlation
 $r_{1,j}$ is equal to $E(g(X_2))$
 (according to Theorem \ref{thm:coeff cor calculation}) which tends to $\mu_3= \sqrt{\frac{(n_1-1)p}{1-p+(n_1-1)p}}$, as $n_1\rightarrow \infty$.
\end{proof}

If we rewrite (\ref{eq:coeffcorr}) in the form
$\sqrt{\frac{1}{\frac{1-p}{(n_1-1)p}+1}}$, we conclude that the expected
value of $r_{1,j}$ tends to $1$ when the order of the graph
increases, for the fixed edge density $p$. Therefore, we show that
$w_1^{S_1}\otimes w_j^{S_2}$, $2\leq j\leq n_2$, becomes a more stable
approximation for the larger orders of graphs with constant edge
level. Moreover, we report higher coefficient correlations $r_{i,j}$
when the order of graphs are $50$ and $100$, respectively
(see Fig.~\ref{fig:er_eigenvectors_50x100}). The same conclusion can be obtained if the order of graphs are $100$ and $200$, respectively. 
 
 Similarly,
for a given order $n_1$ of the graph $S_1$, by rewriting
(\ref{eq:coeffcorr}) in the form
$\sqrt{\frac{n_1-1}{\frac{1}{p}-1+(n_1-1)}} $, we conclude that
$r_{1,j}\rightarrow 1$, if $p$ tends to $1$ and $r_{1,j}\rightarrow
0$, if $p$ tends to $0$. Notice that we have already obtained a more
general conclusion by performing three types of experiments in which
the correlation coefficients increase in total as long as the edge
density increases (for a fixed $n_1$). In our experimental setup $p$
can not tend to $0$ since we deal with connected graphs. Namely,
a sharp threshold for the connectedness of $S_1$ is $\frac{\ln
n_1}{n_1}$ (more precisely if $p>\frac{(1+\epsilon)\ln n_1}{n_1}$
then the graph $S_1$ will almost surely be connected). Since the
parameters in the experimental setup satisfy the mentioned
condition, we almost surely deal with connected graphs and after
applying the condition we obtain
$r_{1,j}\geq\frac{1-\frac{1}{n_1}}{\frac{1}{(1+\epsilon)\ln
n_1}+(1-\frac{2}{n_1})}$. Therefore, $r_{1,j}\rightarrow 1$, as $n_1
\rightarrow \infty$, which theoretically confirms our experimental
results that the correlation coefficient $r_{1,j}$ grows as long as
the order of the connected Erd\H{o}s-R\'{e}nyi graph grows.

\begin{figure}
\centering
\includegraphics[width=0.5\textwidth]{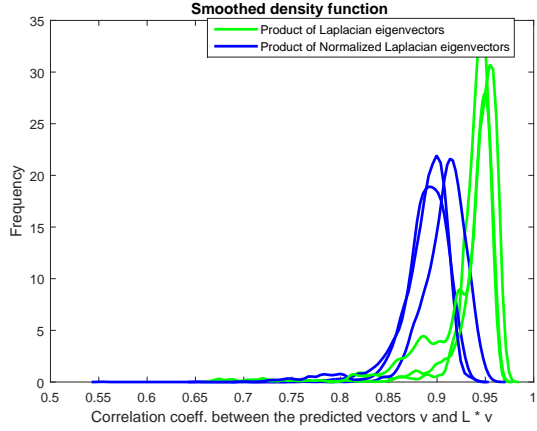} 
\caption{Smoothed probability density functions of vector
correlation coefficients between $w_{i}^{S_{1}}\otimes
w_{j}^{S_{2}}$ and $L_{S_{1}\otimes S_{2}}(w_{i}^{S_{1}}\otimes
w_{j}^{S_{2}})$ are represented by using a solid green line, while
between $v_{i}^{S_{1}}\otimes v_{j}^{S_{2}}$ and $L_{S_{1}\otimes
S_{2}}(v_{i}^{S_{1}}\otimes v_{j}^{S_{2}})$ are represented using a
 solid blue line. Probability density functions are drawn for the
edge density level of 10\%, for the Erd\H{o}s-R\'enyi random graphs
with 50 and 100 vertices.} \label{fig:er_eigenvectors_50x100}
\end{figure}

\bigskip

In the following text, we estimate the correlation coefficients
related to the vectors $v_1^{S_1}\otimes v_j^{S_2}$ and
$v_i^{S_1}\otimes v_1^{S_2}$ in terms of the vertex degrees of $S_1$
and $S_2$, respectively. Moreover, we prove that the expected values of
these coefficients $r'_{1,j}$ can exceed the expected value of $r_{1,j}$ given by
(\ref{eq:coeffcorr}), when $n_1\rightarrow \infty$ and
$n_2\rightarrow \infty$.

\begin{lem}
\label{lem:first} The scalar product of the vectors $L_{S_1\otimes
S_2}(v_1^{S_1}\otimes v_j^{S_2})$ and $v_1^{S_1}\otimes v_j^{S_2}$
is greater than or equal to
  $$
  ({d_1^{S_1}}^2+\cdots+{d_{n_1}^{S_1}}^2)\ \ {v_j^{S_2}}^T L_{S_2}v_j^{S_2},
  $$
  for $1\leq j\leq n_2$. The equality holds true if and only if $S_1$ is
  regular.
\end{lem}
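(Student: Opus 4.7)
The plan is to evaluate the scalar product on the left-hand side using the Kronecker-sum representation
\[
L_{S_1\otimes S_2}=L_{S_1}\otimes D_{S_2}+D_{S_1}\otimes L_{S_2}-L_{S_1}\otimes L_{S_2}
\]
derived at the start of Subsection~\ref{subsec:sayama_approx}, and then exploit the bilinearity $\langle u_1\otimes w_1,u_2\otimes w_2\rangle=\langle u_1,u_2\rangle\langle w_1,w_2\rangle$ to split the scalar product into an $S_1$-factor times an $S_2$-factor on each of the three summands.

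Applied to $v_1^{S_1}\otimes v_j^{S_2}$, this gives
\[
\langle L_{S_1\otimes S_2}(v_1^{S_1}\otimes v_j^{S_2}),\,v_1^{S_1}\otimes v_j^{S_2}\rangle
=\alpha\cdot {v_j^{S_2}}^T D_{S_2}v_j^{S_2}+(\beta-\alpha)\cdot {v_j^{S_2}}^T L_{S_2}v_j^{S_2},
\]
where $\alpha={v_1^{S_1}}^T L_{S_1}v_1^{S_1}$ and $\beta={v_1^{S_1}}^T D_{S_1}v_1^{S_1}$. Substituting $v_1^{S_1}=D_{S_1}^{1/2}1_{S_1}$ gives $\beta=\sum_i(d_i^{S_1})^2$ and $\alpha=\sum_i(d_i^{S_1})^2-2\sum_{\{i,k\}\in E(S_1)}\sqrt{d_i^{S_1}d_k^{S_1}}$, so using $D_{S_2}-L_{S_2}=A_{S_2}$ the right-hand side rearranges as
\[
\sum_i(d_i^{S_1})^2\cdot {v_j^{S_2}}^T L_{S_2}v_j^{S_2}+\alpha\cdot {v_j^{S_2}}^T A_{S_2}v_j^{S_2}.
\]
The first term is exactly the proposed lower bound, so the lemma reduces to nonnegativity of the correction $\alpha\cdot {v_j^{S_2}}^T A_{S_2}v_j^{S_2}$.

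The factor $\alpha$ is nonnegative because $L_{S_1}$ is positive semidefinite; concretely, pairwise AM--GM $2\sqrt{d_i^{S_1}d_k^{S_1}}\le d_i^{S_1}+d_k^{S_1}$ summed over $E(S_1)$ yields $2\sum_{\{i,k\}\in E(S_1)}\sqrt{d_i^{S_1}d_k^{S_1}}\le\sum_v(d_v^{S_1})^2$, with equality iff every edge of $S_1$ has endpoints of equal degree. Since $S_1$ is connected, this is equivalent to $S_1$ being regular, and the equality clause of the lemma is read off from this AM--GM equality case.

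The step I expect to be the main obstacle is controlling the sign of the remaining factor ${v_j^{S_2}}^T A_{S_2}v_j^{S_2}$, which is not automatically nonnegative for a generic eigenvector of $\mathcal{L}_{S_2}$. I would either verify this in the specific range of $j$ used in the subsequent correlation analysis (for $j=1$, ${v_1^{S_2}}^T A_{S_2}v_1^{S_2}=2\sum_{\{a,b\}\in E(S_2)}\sqrt{d_a^{S_2}d_b^{S_2}}\ge 0$), or regroup the edge-level contributions of the left-hand side via the algebraic identity
\[
(\sqrt{d_i}a-\sqrt{d_k}b)^2+(\sqrt{d_i}b-\sqrt{d_k}a)^2=(d_i+d_k)(a-b)^2+2ab(\sqrt{d_i}-\sqrt{d_k})^2,
\]
which, after summing over the edges of $S_1\otimes S_2$, exposes exactly the same residual $\alpha\cdot {v_j^{S_2}}^T A_{S_2}v_j^{S_2}$ and confirms the decomposition. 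Once this sign is pinned down, $\mathrm{LHS}\ge\mathrm{RHS}$ follows at once.
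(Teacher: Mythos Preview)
Your reduction is essentially the paper's own argument. The paper expands via $L_{S_1\otimes S_2}=D_{S_1}\otimes D_{S_2}-A_{S_1}\otimes A_{S_2}$ rather than the three-term Laplacian identity, but after evaluating the quadratic form both routes land on
\[
\Bigl(\sum_i (d_i^{S_1})^2\Bigr)\,{v_j^{S_2}}^T D_{S_2}v_j^{S_2}\;-\;\Bigl(2\!\!\sum_{\{i,k\}\in E(S_1)}\!\!\sqrt{d_i^{S_1}d_k^{S_1}}\Bigr)\,{v_j^{S_2}}^T A_{S_2}v_j^{S_2},
\]
and both invoke the same AM--GM bound $2\sum_{\{i,k\}}\sqrt{d_i d_k}\le\sum_i d_i^2$ (with equality iff $S_1$ is regular) to conclude.

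The obstacle you single out is precisely the step the paper passes over: replacing $2\sum\sqrt{d_id_k}$ by the larger $\sum d_i^2$ in the term carrying the factor $-{v_j^{S_2}}^T A_{S_2}v_j^{S_2}$ gives a \emph{lower} bound only when ${v_j^{S_2}}^T A_{S_2}v_j^{S_2}\ge 0$, and the paper simply asserts the final inequality without verifying this sign. Your caution is warranted, because the sign can indeed be negative. For instance, take $S_1=P_3$ (degrees $1,2,1$, so $\sum d_i^2=6$ and $\alpha=6-4\sqrt{2}>0$) and $S_2=K_2$ with $v_2^{S_2}=(1,-1)^T$, the normalized-Laplacian eigenvector for $\lambda=2$; then ${v_2^{S_2}}^T A_{S_2}v_2^{S_2}=-2$, and a direct computation gives the scalar product $12+8\sqrt{2}\approx 23.3$, while the claimed lower bound is $6\cdot 4=24$. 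So the inequality as stated fails for this $j$, and neither your argument nor the paper's can be completed over the full range $1\le j\le n_2$ without an additional hypothesis guaranteeing ${v_j^{S_2}}^T A_{S_2}v_j^{S_2}\ge 0$.
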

\begin{proof}
Since $v_1^{S_1}=D_{S_1}^{\frac 1 2}1_{S_1}$ we have the following
chain of equalities
\begin{eqnarray}
&&\langle L_{S_1\otimes S_2}(D_{S_1}^{\frac 1 2}1_{S_1}\otimes
v_j^{S_2}), D_{S_1}^{\frac 1 2}1_{S_1}\otimes v_j^{S_2}
\rangle=(D_{S_1}^{\frac 1 2}1_{S_1}\otimes v_j^{S_2})^T
L_{S_1\otimes S_2}(D_{S_1}^{\frac 1 2}1_{S_1}\otimes v_j^{S_2})\nonumber \\
 &=&(D_{S_1}^{\frac 1 2}1_{S_1}\otimes v_j^{S_2})^T
(D_{S_1}\otimes D_{S_2}-A_{S_1}\otimes A_{S_2})(D_{S_1}^{\frac 1
2}1_{S_1}\otimes v_j^{S_2}) \nonumber\\
 &=&(1_{S_1}^T
D_{S_1}^{\frac 1 2}\otimes {v_j^{S_2}}^T) (D_{S_1}\otimes
D_{S_2})(D_{S_1}^{\frac 1 2}1_{S_1}\otimes
v_j^{S_2})-(1_{S_1}^TD_{S_1}^{\frac 1 2}\otimes {v_j^{S_2}}^T)
(A_{S_1}\otimes A_{S_2})(D_{S_1}^{\frac 1 2}1_{S_1}\otimes v_j^{S_2}) \nonumber\\
&=&(1_{S_1}^T D_{S_1}^{\frac 1 2}D_{S_1}D_{S_1}^{\frac 1
2}1_{S_1})\otimes({v_j^{S_2}}^TD_{S_2}v_j^{S_2})-(1_{S_1}^T
D_{S_1}^{\frac 1 2}A_{S_1}D_{S_1}^{\frac 1
2}1_{S_1})\otimes({v_j^{S_2}}^TA_{S_2}v_j^{S_2})\nonumber\\
 &=&(1_{S_1}^T D_{S_1}^21_{S_1})\otimes({v_j^{S_2}}^TD_{S_2}v_j^{S_2})-([{d_1^{S_1}}^{\frac 1 2},\ldots,
 {d_{n_1}^{S_1}}^{\frac 1 2}]A_{S_1}[{d_1^{S_1}}^{\frac 1 2},\ldots, {d_{n_1}^{S_1}}^{\frac 1
 2}]^T)\otimes({v_j^{S_2}}^TA_{S_2}v_j^{S_2}).  \label{eq:last}\\
 \nonumber
\end{eqnarray}
Furthermore, the quadratic forms $1_{S_1}^T D_{S_1}^21_{S_1}$ and
$[{d_1^{S_1}}^{\frac 1 2},\ldots,
 {d_{n_1}^{S_1}}^{\frac 1 2}]A_{S_1}[{d_1^{S_1}}^{\frac 1 2},\ldots, {d_{n_1}^{S_1}}^{\frac 1 2}]^T$
are equal to $\sum_{i=1}^{n_1}{d_i^{S_1}}^2$ and $\sum_{\{i,j\}\in
E(S_1)} 2{d_i^{S_1}}^{\frac 1 2}{d_j^{S_1}}^{\frac 1 2}$,
respectively. According to the inequality of arithmetic and geometric
means it holds that $\sum_{\{i,j\}\in E(S_1)} 2{d_i^{S_1}}^{\frac 1
2}{d_j^{S_1}}^{\frac 1 2}\leq \sum_{\{i,j\}\in E(S_1)}
{d_i}^{S_1}+{d_j}^{S_1}=\sum_{i=1}^{n_1}{d_i^{S_1}}^2$. The equality
holds true if and only if $d_1^{S_1}=\ldots =d_{n_1}^{S_1}$. Finally, we
have that the term (\ref{eq:last}) is greater than or equal to
$\sum_{i=1}^{n_1}{d_i^{S_1}}^2({v_j^{S_2}}^TD_{S_2}v_j^{S_2})-\sum_{i=1}^{n_1}{d_i^{S_1}}^2({v_j^{S_2}}^TA_{S_2}v_j^{S_2})=\sum_{i=1}^{n_1}{d_i^{S_1}}^2({v_j^{S_2}}^TL_{S_2}v_j^{S_2})$.
\end{proof}

\begin{lem}
\label{lem:second} The norm of the vector $L_{S_1\otimes
S_2}(v_1^{S_1}\otimes v_j^{S_2})$ is less than or equal to
  $$
  \sqrt{{d_1^{S_1}}^3+\cdots{d_{n_1}^{S_1}}^3}\ \|L_{S_2}v_j^{S_2}\|,
  $$
  for $1\leq j\leq n_2$. The equality holds true if and only if $S_1$ is
  regular.
\end{lem}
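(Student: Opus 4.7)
The plan is to mirror the argument of Lemma~\ref{lem:first}, but now to estimate the norm of $L_{S_1\otimes S_2}(v_1^{S_1}\otimes v_j^{S_2})$ rather than its scalar product with $v_1^{S_1}\otimes v_j^{S_2}$. The scheme is to unfold the vector via the Kronecker mixed-product property, expand the resulting squared norm into a quadratic form in three $S_1$-invariants weighted by three $S_2$-invariants, and then dominate each $S_1$-invariant by $\sum_i (d_i^{S_1})^3$ using the same kind of edge-wise (weighted) AM--GM that drove the previous lemma.

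First I would use $L_{S_1\otimes S_2}=D_{S_1}\otimes D_{S_2}-A_{S_1}\otimes A_{S_2}$, the identity $(M\otimes N)(P\otimes Q)=MP\otimes NQ$, and $v_1^{S_1}=D_{S_1}^{1/2}1_{S_1}$ to write
$$
L_{S_1\otimes S_2}(v_1^{S_1}\otimes v_j^{S_2})=D_{S_1}^{3/2}1_{S_1}\otimes D_{S_2}v_j^{S_2}\;-\;A_{S_1}D_{S_1}^{1/2}1_{S_1}\otimes A_{S_2}v_j^{S_2}.
$$
Applying $\|u\otimes w-u'\otimes w'\|^2=\|u\|^2\|w\|^2-2\langle u,u'\rangle\langle w,w'\rangle+\|u'\|^2\|w'\|^2$ then decouples the squared norm into
$$
\|L_{S_1\otimes S_2}(v_1^{S_1}\otimes v_j^{S_2})\|^2=\alpha\,{v_j^{S_2}}^T D_{S_2}^2 v_j^{S_2}-2\beta\,{v_j^{S_2}}^T D_{S_2}A_{S_2}v_j^{S_2}+\gamma\,{v_j^{S_2}}^T A_{S_2}^2 v_j^{S_2},
$$
with $\alpha:=1_{S_1}^T D_{S_1}^3 1_{S_1}=\sum_i(d_i^{S_1})^3$, $\beta:=1_{S_1}^T D_{S_1}^{3/2}A_{S_1}D_{S_1}^{1/2}1_{S_1}=\sum_{\{i,k\}\in E(S_1)}\sqrt{d_id_k}(d_i+d_k)$, and $\gamma:=\|A_{S_1}D_{S_1}^{1/2}1_{S_1}\|^2=\sum_i\bigl(\sum_{k\sim i}\sqrt{d_k}\bigr)^2$.

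Next I would show $\beta\le\alpha$ and $\gamma\le\alpha$, with equality iff $S_1$ is regular. For $\beta$, the weighted AM--GM $\tfrac34 a+\tfrac14 b\ge a^{3/4}b^{1/4}$ applied with $a=d_i^2$, $b=d_k^2$ gives $d_i^{3/2}d_k^{1/2}\le\tfrac{3d_i^2+d_k^2}{4}$; adding its symmetric companion on each edge collapses the constants to $d_i^2+d_k^2$, and summing produces $\sum_{\{i,k\}\in E(S_1)}(d_i^2+d_k^2)=\sum_i(d_i^{S_1})^3=\alpha$. For $\gamma$, a rowwise Cauchy--Schwarz gives $(\sum_{k\sim i}\sqrt{d_k})^2\le d_i\sum_{k\sim i}d_k$, after which $2d_id_k\le d_i^2+d_k^2$ again reduces the global sum to $\alpha$. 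In both cases equality forces $d_i=d_k$ along every edge of the connected graph $S_1$, i.e.\ forces $S_1$ to be regular.

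Finally, substituting $\beta\le\alpha$ and $\gamma\le\alpha$ into the expansion and recognising ${v_j^{S_2}}^T D_{S_2}^2 v_j^{S_2}-2{v_j^{S_2}}^T D_{S_2}A_{S_2}v_j^{S_2}+{v_j^{S_2}}^T A_{S_2}^2 v_j^{S_2}={v_j^{S_2}}^T L_{S_2}^2 v_j^{S_2}=\|L_{S_2}v_j^{S_2}\|^2$ should close the chain to $\|L_{S_1\otimes S_2}(v_1^{S_1}\otimes v_j^{S_2})\|^2\le\alpha\|L_{S_2}v_j^{S_2}\|^2$, and the stated inequality then follows by taking square roots, with the equality case inherited from the two AM--GM steps. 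The main obstacle is precisely this final substitution: the middle summand contains the indefinite scalar ${v_j^{S_2}}^T D_{S_2}A_{S_2}v_j^{S_2}$, so replacing the smaller constant $\beta$ by the larger $\alpha$ runs in the favourable direction only when that scalar is non-positive; one must therefore exploit the normalized Laplacian eigenvector structure of $v_j^{S_2}$ (in particular its orthogonality to $v_1^{S_2}=D_{S_2}^{1/2}1_{S_2}$) to control the sign of the cross term. The regularity case $S_1=d\cdot I_{n_1}$ bypasses this difficulty by collapsing $\beta=\gamma=\alpha$, which is exactly when the inequality becomes an equality.
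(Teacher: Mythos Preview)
Your setup and the two edge-wise bounds $\beta\le\alpha$ and $\gamma\le\alpha$ are correct, and you are right to flag the final substitution as the crux: replacing $\beta$ by the larger $\alpha$ in the term $-2\beta\,{v_j^{S_2}}^{T}D_{S_2}A_{S_2}v_j^{S_2}$ only moves in the desired direction when that cross term is nonpositive, and the orthogonality $v_j^{S_2}\perp D_{S_2}^{1/2}1_{S_2}$ does not force this. In fact the obstacle is fatal, because the lemma as stated is false. Take $S_1=P_3$ (degrees $1,2,1$, so $\alpha=10$, $\beta=6\sqrt{2}$, $\gamma=8$) and $S_2=C_5$; for the normalized-Laplacian eigenvector $v_j^{S_2}$ with eigenvalue $1-\cos(2\pi/5)$ one has $D_{S_2}v_j^{S_2}=2v_j^{S_2}$ and $A_{S_2}v_j^{S_2}=2\cos(2\pi/5)\,v_j^{S_2}$, so the cross term is positive; a direct computation then gives
\[
\|L_{S_1\otimes S_2}(v_1^{S_1}\otimes v_j^{S_2})\|^2
=8\bigl(1-\sqrt{2}\cos\tfrac{2\pi}{5}\bigr)^2+16\bigl(\sqrt{2}-\cos\tfrac{2\pi}{5}\bigr)^2
\approx 22.1,
\]
whereas $\alpha\,\|L_{S_2}v_j^{S_2}\|^2=10\cdot 4\bigl(1-\cos\tfrac{2\pi}{5}\bigr)^2\approx 19.1$. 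Thus no argument---yours or any other---can close the gap, and the ``only if'' clause in the equality statement is likewise wrong (e.g.\ any $v_j^{S_2}$ with $A_{S_2}v_j^{S_2}=0$ forces equality regardless of $S_1$).

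For comparison, the paper's own argument takes a coarser route: it applies the triangle inequality to the tensor decomposition to obtain
$\|L_{S_1\otimes S_2}(v_1^{S_1}\otimes v_j^{S_2})\|\le\sqrt{\alpha}\bigl(\|D_{S_2}v_j^{S_2}\|+\|A_{S_2}v_j^{S_2}\|\bigr)$,
and then asserts the ``fact'' $\|L_{S_2}v_j^{S_2}\|=\|D_{S_2}v_j^{S_2}\|+\|A_{S_2}v_j^{S_2}\|$. That equality is false in general (the triangle inequality gives $\le$, which is the wrong direction for what the argument needs), so the paper's proof breaks at the same point yours does; your quadratic expansion is actually more transparent, because it isolates exactly the sign condition that would be required rather than hiding it behind an incorrect identity.
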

\begin{proof}
We have the following chain of equalities
\begin{eqnarray}
&&L_{S_1\otimes S_2}(D_{S_1}^{\frac 1 2}1_{S_1}\otimes v_j^{S_2})
=(D_{S_1}\otimes D_{S_2}-A_{S_1}\otimes A_{S_2})(D_{S_1}^{\frac 1 2}1_{S_1}\otimes v_j^{S_2})\nonumber \nonumber\\
&=&(D_{S_1}D_{S_1}^{\frac 1 2}1_{S_1})\otimes
(D_{S_2}v_j^{S_2})-(A_{S_1}D_{S_1}^{\frac 1 2}1_{S_1})\otimes
(A_{S_2}v_j^{S_2})\nonumber\\
&=&[{d_1^{S_1}}^{\frac 3 2},\ldots, {d_{n_1}^{S_1}}^{\frac 3
2}]^T\otimes (D_{S_2}v_j^{S_2})-[\sum_{\{1,i\}\in E(S_1)} d_i^{\frac
1 2},\ldots, \sum_{\{n_1,i\}\in E(S_1)} d_i^{\frac 1 2}]^T\otimes
(A_{S_2}v_j^{S_2}).\nonumber\\ \nonumber
\end{eqnarray}
Furthermore, since $u\otimes(Av)=(u\otimes A)v$, where $u^T\in
R^{n_1}$, $v^T\in R^{n_2}$ and $A\in R^{n_2\times n_2}$ it holds
that

\begin{eqnarray}
\label{eq:intermediate2}
& \|L_{S_1\otimes S_2}(D_{S_1}^{\frac 1 2}1_{S_1}\otimes
v_j^{S_2})\| & = \\
&& =\|([{d_1^{S_1}}^{\frac 3 2},\ldots, {d_{n_1}^{S_1}}^{\frac 3
2}]^T\otimes D_{S_2}-[\sum_{\{1,i\}\in E(S_1)} d_i^{\frac 1
2},\ldots, \sum_{\{n_1,i\}\in E(S_1)} d_i^{\frac 1 2}]^T\otimes
A_{S_2})v_j^{S_2}\|. \nonumber
\end{eqnarray}
Now, if we denote $B=[{d_1^{S_1}}^{\frac 3 2},\ldots,
{d_{n_1}^{S_1}}^{\frac 3 2}]^T\otimes D_{S_2}-[\sum_{\{1,i\}\in
E(S_1)} {d_i^{S_1}}^{\frac 1 2},\ldots, \sum_{\{n_1,i\}\in E(S_1)}
{d_i^{S_1}}^{\frac 1 2}]^T\otimes A_{S_2} $ and $A_{S_2}=[a_{i,j}]$,
$1\leq i,j\leq n_2$, then we can easily conclude that
\begin{equation}
\notag B=\begin{bmatrix}
    B_{1}\\
    B_{2}\\
    \vdots\\
    B_{n_{1}}
\end{bmatrix},\quad  B_k = \left \{
  \begin{aligned}
    &{d_k^{S_1}}^{\frac 3 2}d_i^{S_2}, && \text{if}\ i=j \\
    &-\sum_{\{k,l\}\in E(S_1)}
{d_l^{S_1}}^{\frac 1 2}a_{i,j}, && \text{if}\ i\neq j
  \end{aligned} \right.,\quad  1\leq k\leq n_{1}.
\end{equation}

Therefore, we obtain that

\begin{equation}
\label{eq:intermediate}
\begin{array}{lcl}
\|Bv_j^{S_2}\| & = & \|[{d_1^{S_1}}^{\frac 3 2},\ldots,
{d_{n_1}^{S_1}}^{\frac 3 2}]^T \otimes
(D_{S_2}v_j^{S_2})\|+\|[\displaystyle\sum_{\{1,i\}\in E(S_1)} {d_i^{S_1}}^{\frac
1 2},\ldots, \sum_{\{n_1,i\}\in E(S_1)} {d_i^{S_1}}^{\frac 1 2}]^T
\otimes (-A_{S_2}v_j^{S_2})\|\\ 
 & \leq & \|[{d_1^{S_1}}^{\frac 3 2},\ldots, {d_{n_1}^{S_1}}^{\frac 3 2}]^T\| \otimes
\|D_{S_2}v_j^{S_2}\|+\|[\displaystyle\sum_{\{1,i\}\in E(S_1)} {d_i^{S_1}}^{\frac 1 2},\ldots, \sum_{\{n_1,i\}\in E(S_1)} {d_i^{S_1}}^{\frac 1 2}]^T\|\otimes \|(A_{S_2}v_j^{S_2})\|.\\ 
\end{array}
\end{equation}

Furthermore, according to the inequality between the arithmetic mean and root mean square \\ $(\sum_{\{k,i\}\in E(S_1)} {d_i^{S_1}}^{\frac 1 2})^2\leq {d_k^{S_1}} \sum_{\{k,i\}\in E(S_1)} {d_i^{S_1}}$, for $1\leq k\leq n_1$, the following inequalities holds
   
\begin{eqnarray}
\label{eq:intermediate1}
\nonumber
 \|[\sum_{\{1,i\}\in E(S_1)} {d_i^{S_1}}^{\frac 1 2},\ldots,
\sum_{\{n_1,i\}\in E(S_1)} {d_i^{S_1}}^{\frac 1 2}]^T\|^2&\leq&
2\sum_{\{i,j\}\in E(S_1)} {d_i^{S_1}} {d_j^{S_1}}\\ \nonumber
&\leq& \sum_{\{i,j\}\in E(S_1)} {d_i^{S_1}}^2+
{d_j^{S_1}}^2=\sum_{i=1}^{n_1}{d_i^{S_1}}^3=\|[{d_1^{S_1}}^{\frac 3
2},\ldots, {d_{n_1}^{S_1}}^{\frac 3 2}]^T\|^2.\\
\end{eqnarray}

The equality holds true if and only if $d_1^{S_1}=\ldots =d_{n_1}^{S_1}$.
Now, according to the inequalities (\ref{eq:intermediate2}),
(\ref{eq:intermediate}) and (\ref{eq:intermediate1}) we conclude that 

$$
\|L_{S_1\otimes S_2}(D_{S_1}^{\frac 1 2}1_{S_1}\otimes
v_j^{S_2})\|=\|Bv_j^{S_2}\|\leq \sum_{i=1}^{n_1}\sqrt{{d_i^{S_1}}^3}(\|D_{S_2}v_j^{S_2}\|+\|A_{S_2}v_j^{S_2}\|).
$$

From the fact that
$\|L_{S_2}v_j^{S_2}\|=\|(D_{S_2}-A_{S_2})v_j^{S_2}\|=\|D_{S_2}v_j^{S_2}\|+\|A_{S_2}v_j^{S_2}\|$
 we finally have that

$$
\|L_{S_1\otimes S_2}(D_{S_1}^{\frac 1 2}1_{S_1}\otimes
v_j^{S_2})\|\leq \sqrt{\sum_{i=1}^{n_1}{d_i^{S_1}}^3}
\|L_{S_2}v_j^{S_2}\|.
$$

\end{proof}

\begin{theorem}
\label{thm1:coeff cor calculation}
  The correlation coefficients $r'_{1,j}$ ($2\leq j\leq n_2$) corresponding to the vectors $L_{S_1\otimes S_2}(D_{S_1}^{\frac 1 2}1_{S_1}\otimes v_j^{S_2})$ and
  $D_{S_1}^{\frac 1 2}1_{S_1}\otimes v_j^{S_2}$ are greater than or  equal to
  $$
  \frac{{d_1^{S_1}}^2+\cdots+{d_{n_1}^{S_1}}^2}{\sqrt{({d_1^{S_1}}^3+\cdots+{d_{n_1}^{S_1}}^3)({d_1^{S_1}}+\cdots+{d_{n_1}^{S_1}})}}r_j^{S_2},
  $$
  where $r_j^{S_2}$ is the correlation coefficient corresponding to
  the vectors $L_{S_2}v_j^{S_2}$ and $v_j^{S_2}$.
\end{theorem}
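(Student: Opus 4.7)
The statement is essentially the assembly of the two preceding lemmas together with the explicit evaluation of the easy quantities that were not yet computed. My plan is to start from the definition of the correlation coefficient
$$
r'_{1,j}=\frac{\langle L_{S_1\otimes S_2}(v_1^{S_1}\otimes v_j^{S_2}),\ v_1^{S_1}\otimes v_j^{S_2}\rangle}{\|L_{S_1\otimes S_2}(v_1^{S_1}\otimes v_j^{S_2})\|\,\|v_1^{S_1}\otimes v_j^{S_2}\|},
$$
substituting $v_1^{S_1}=D_{S_1}^{1/2}1_{S_1}$, and then bounding numerator and denominator separately.

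For the numerator, I would invoke Lemma~\ref{lem:first} to write
$$
\langle L_{S_1\otimes S_2}(D_{S_1}^{1/2}1_{S_1}\otimes v_j^{S_2}),\ D_{S_1}^{1/2}1_{S_1}\otimes v_j^{S_2}\rangle \geq \Bigl(\sum_{i=1}^{n_1}{d_i^{S_1}}^2\Bigr)\,{v_j^{S_2}}^{T}L_{S_2}v_j^{S_2}.
$$
For the norm in the denominator, Lemma~\ref{lem:second} directly gives
$$
\|L_{S_1\otimes S_2}(D_{S_1}^{1/2}1_{S_1}\otimes v_j^{S_2})\|\leq \sqrt{\sum_{i=1}^{n_1}{d_i^{S_1}}^3}\;\|L_{S_2}v_j^{S_2}\|.
$$
The remaining factor $\|D_{S_1}^{1/2}1_{S_1}\otimes v_j^{S_2}\|=\|D_{S_1}^{1/2}1_{S_1}\|\cdot\|v_j^{S_2}\|$ is computed from $\|D_{S_1}^{1/2}1_{S_1}\|^2=1_{S_1}^{T}D_{S_1}1_{S_1}=\sum_{i=1}^{n_1}d_i^{S_1}$.

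Combining these three ingredients (and using that ${v_j^{S_2}}^{T}L_{S_2}v_j^{S_2}\geq 0$ since $L_{S_2}$ is positive semidefinite, so the lower bound on the inner product preserves the inequality when divided), the fraction factors cleanly as
$$
r'_{1,j}\geq \frac{\sum_{i=1}^{n_1}{d_i^{S_1}}^2}{\sqrt{\bigl(\sum_{i=1}^{n_1}{d_i^{S_1}}^3\bigr)\bigl(\sum_{i=1}^{n_1}d_i^{S_1}\bigr)}}\cdot\frac{{v_j^{S_2}}^{T}L_{S_2}v_j^{S_2}}{\|L_{S_2}v_j^{S_2}\|\,\|v_j^{S_2}\|},
$$
and the second factor is precisely the correlation coefficient $r_j^{S_2}$ between $L_{S_2}v_j^{S_2}$ and $v_j^{S_2}$.

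Since the two lemmas do all the heavy lifting, there is no substantive obstacle in the proof itself; the only thing I need to be careful about is bookkeeping the direction of inequalities. In particular, one must verify that the lower bound from Lemma~\ref{lem:first} is nonnegative in order to combine it with the upper bound from Lemma~\ref{lem:second} without flipping signs, and this is immediate because $L_{S_2}$ is positive semidefinite. The case $j=1$ is excluded precisely because $L_{S_2}v_1^{S_2}=0$ would make the correlation coefficient $r_j^{S_2}$ ill defined, which justifies the stated range $2\le j\le n_2$.
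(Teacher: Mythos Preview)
Your proposal is correct and follows exactly the paper's own argument: apply Lemma~\ref{lem:first} to bound the numerator from below, Lemma~\ref{lem:second} to bound $\|L_{S_1\otimes S_2}(D_{S_1}^{1/2}1_{S_1}\otimes v_j^{S_2})\|$ from above, compute $\|D_{S_1}^{1/2}1_{S_1}\otimes v_j^{S_2}\|=\sqrt{\sum_i d_i^{S_1}}\,\|v_j^{S_2}\|$, and combine. Your remark that the nonnegativity of ${v_j^{S_2}}^{T}L_{S_2}v_j^{S_2}$ is needed to preserve the inequality direction is a small clarification the paper leaves implicit.
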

\begin{proof}
According to Lemma \ref{lem:first} and Lemma \ref{lem:second} we
obtain that
\begin{eqnarray*}
r'_{1,j}&=&\frac{\langle L_{S_1\otimes S_2}(D_{S_1}^{\frac 1
2}1_{S_1}\otimes v_j^{S_2}), D_{S_1}^{\frac 1 2}1_{S_1}\otimes
v_j^{S_2} \rangle}{\|L_{S_1\otimes
S_2}(D_{S_1}^{\frac 1 2}1_{S_1}\otimes v_j^{S_2})\|\cdot \| D_{S_1}^{\frac 1 2}1_{S_1}\otimes v_j^{S_2}\|}\\
&\geq&\frac{({d_1^{S_1}}^2+\cdots+{d_{n_1}^{S_1}}^2)\ \
{v_j^{S_2}}^T
L_{S_2}v_j^{S_2}}{\sqrt{{d_1^{S_1}}^3+\cdots{d_{n_1}^{S_1}}^3}\
\|L_{S_2}v_j^{S_2}\|\sqrt{d_1^{S_1}+\cdots
d_{n_1}^{S_1}}\|v_j^{S_2}\|}\\
&=& \frac{{d_1^{S_1}}^2+\cdots+{d_{n_1}^{S_1}}^2}
{\sqrt{{d_1^{S_1}}^3+\cdots{d_{n_1}^{S_1}}^3}\
\sqrt{d_1^{S_1}+\cdots d_{n_1}^{S_1}}} \cdot
\frac{{v_j^{S_2}}^TL_{S_2}v_j^{S_2}}{\|L_{S_2}v_j^{S_2}\|\|v_j^{S_2}\|}.
\end{eqnarray*}
\end{proof}

Let us mention that the sum of cubes of vertex degrees of a graph $G$ is known as the forgotten topological index denoted by $F(G)$ \cite{fti}, while the sum of squares of vertex degrees of a graph $G$ represents well known first Zagreb index, denoted by $M_1^1(G)$ \cite{zagrebindex}.
In the following statement we actually prove that the expected value of $\frac{M_1^1(G)}{\sqrt{2mF(t)}}$ is greater than or equal to the expected value of
correlation coefficient $r_{1,j}$ for the random graphs in the asymptotic case. However, it can be shown that $\frac{M_1^1(G)}{\sqrt{2mF(t)}}\geq r_{1,j}$ does not always hold for an arbitrary graph and it would be nice to find the minimum of the function $\frac{M_1^1(G)}{\sqrt{2mF(t)} r_{1,j}}$. This would make a more elegant expression for the upper bound for $F(G)$ than those that can be found in the literature \cite{che2016lower}.

\begin{theorem}
\label{thm:second inequality}
 The asymptotic value of the expected value of the
correlation coefficient $r_{1,j}$ is less than or equal to the
asymptotic value of the expected value of
 $$
  \frac{{d_1^{S_1}}^2+\cdots+{d_{n_1}^{S_1}}^2}{\sqrt{({d_1^{S_1}}^3+\cdots+{d_{n_1}^{S_1}}^3)({d_1^{S_1}}+\cdots+{d_{n_1}^{S_1}})}},
  $$
  as $n_1\rightarrow \infty$.
\end{theorem}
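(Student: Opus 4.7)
The plan is to mirror the strategy used in the proof of Theorem~\ref{thm:corrcoeff_expected}: I would compute the asymptotic expected value of the right-hand side via the delta method, and then compare it with the closed form $\sqrt{(n_1-1)p/(1-p+(n_1-1)p)}$ already established for the left-hand side.

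First, I would introduce the empirical moments $Y_k = \frac{1}{n_1}\sum_{i=1}^{n_1}(d_i^{S_1})^k$ for $k=1,2,3$, so that the ratio in the statement rewrites as $Y_2/\sqrt{Y_1 Y_3}$. Since each $d_i^{S_1}$ is binomial with parameters $(n_1-1,p)$, the first three raw moments of the binomial distribution yield $\mu_1 = (n_1-1)p$, $\mu_2 = (n_1-1)p(1-p) + (n_1-1)^2 p^2$, and $\mu_3 = (n_1-1)p(1-p)(1-2p) + 3(n_1-1)^2 p^2(1-p) + (n_1-1)^3 p^3$.

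Next, applying the multivariate central limit theorem to the identically distributed vectors $((d_i^{S_1})^k)_{k=1,2,3}$ shows that $(Y_1,Y_2,Y_3)$ is asymptotically normal $AN([\mu_1,\mu_2,\mu_3]^T, c_n^2 \Sigma)$, with $c_n^2 = 1/n_1$ and $\Sigma$ a nonnegative definite symmetric covariance matrix. Proposition~\ref{Brockwell} applied to the continuously differentiable map $g(y_1,y_2,y_3) = y_2/\sqrt{y_1 y_3}$ then gives that the asymptotic expected value of $Y_2/\sqrt{Y_1 Y_3}$ equals $g(\mu_1,\mu_2,\mu_3) = \mu_2/\sqrt{\mu_1\mu_3}$.

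Finally, it remains to verify the inequality
$$\sqrt{\frac{(n_1-1)p}{1-p+(n_1-1)p}} \le \frac{\mu_2}{\sqrt{\mu_1\mu_3}}.$$
Setting $a=(n_1-1)p$ and $b=1-p$, squaring both sides and clearing denominators reduces the problem to $(a+b)^3 \ge a\bigl[b(1-2p)+3ab+a^2\bigr]$. Expanding the cube, the difference collapses to $b\bigl[b^2+a(2-p)\bigr]$, which is manifestly nonnegative because $b=1-p>0$ and $2-p>0$ for $p\in(0,1)$. The main obstacle will be the careful application of Proposition~\ref{Brockwell} in three variables: checking continuous differentiability of $g$ in a neighborhood of $(\mu_1,\mu_2,\mu_3)$ (immediate since $\mu_1,\mu_3>0$), verifying nondegeneracy of the diagonal of $D\Sigma D'$, and ensuring that convergence in distribution of $g(Y_1,Y_2,Y_3)$ transfers to convergence of its expected value. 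Once these analytic details are handled, the remaining verification is the short algebraic identity above.
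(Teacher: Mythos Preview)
Your proposal is correct and follows essentially the same route as the paper: compute the first three raw moments of the binomial degree distribution, invoke Proposition~\ref{Brockwell} (the delta method) to identify the asymptotic expectation of the ratio as $\mu_2/\sqrt{\mu_1\mu_3}$, and then verify the resulting algebraic inequality. Your substitution $a=(n_1-1)p$, $b=1-p$ and the factorization of the difference as $b[b^2+a(2-p)]$ is in fact cleaner than the paper's route, which rewrites the inequality as a cubic in $p$, solves for $n_1$, and finishes via $p(2-p)\le 1$.
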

\begin{proof}
According to Theorem \ref{thm:corrcoeff_expected} we have that the
asymptotic value of expected value of the correlation coefficient
$r_{1,j}$ is equal to $ \sqrt{\frac{(n_1-1)p}{1-p+(n_1-1)p}}$, as
$n_1\rightarrow \infty$. On the other hand, as
$P(d_i^{S_1}=k)={{n_1-1}\choose k}p^k (1-p)^{n_1-k-1}$ we have that
$E(Y_1)=n_1(n_1-1)p$ for $Y_1=d_1^{S_1}+\ldots+d_{n_1}^{S_1}$.
Similarly, we can conclude that
$E(Y_2)=n_1((n_1-1)p(1-p)+(n_1-1)^2p^2)$ for
$Y_2={d_1^{S_1}}^2+\cdots+{d_{n_1}^{S_1}}^2$ and
$E(Y_3)=n_1((n_1-1)(n_1-2)(n_1-3)p^3+3p^2(n_1-1)(n_1-2)+(n_1-1)p)$
for $Y_3={d_1^{S_1}}^3+\cdots+{d_{n_1}^{S_1}}^3$. Using Proposition
\ref{Brockwell} we can conduct the similar proof as we do in
Theorem \ref{thm:corrcoeff_expected} and conclude that that asymptotic
value of the expected value of
$\frac{{d_1^{S_1}}^2+\cdots+{d_{n_1}^{S_1}}^2}{\sqrt{({d_1^{S_1}}^3+\cdots+{d_{n_1}^{S_1}}^3)({d_1^{S_1}}+\cdots+{d_{n_1}^{S_1}})}},$
as $n_1\rightarrow \infty$, is equal to
$\frac{E(Y_2)}{\sqrt{E(Y_1)E(Y_3)}}$. It only remains to show that
$$
\frac{n_1((n_1-1)p(1-p)+(n_1-1)^2p^2)}{\sqrt{n_1(n_1-1)p\
n_1((n_1-1)(n_1-2)(n_1-3)p^3+3p^2(n_1-1)(n_1-2)+(n_1-1)p)}}\geq
\sqrt{\frac{(n_1-1)p}{1-p+(n_1-1)p}}.
$$
After a short calculation, the inequality can be reduced to
$$\sqrt{\frac{(1-p+(n_1-1)p)^3}{(n_1-1)(n_1-2)(n_1-3)p^3+3(n_1-1)(n_1-2)p+(n_1-1)p}}\geq 1,$$
which is equivalent to $(n_1-2)p^3-3(n_1-2)p^2+(2n_1-5)p+1\geq 0$.
Furthermore, this can be rewritten in the following way
$$
n_1\geq
2+\frac{2p^3-6p^2+5p-1}{p^3-3p^2+2p}=2+\frac{p-1}{p^3-3p^2+2p}=2+\frac{p-1}{p(p-1)(p-2)}=2-\frac{1}{p(2-p)}.
$$
The arithmetic-geometric mean inequality implies that
$p(2-p)\leq (\frac{p+2-p}{2})^2=1$ and therefore we get $n_1\geq
1\geq 2-\frac{1}{p(2-p)}$, which is obviously true.
\end{proof}

According to Theorem \ref{thm1:coeff cor calculation} and Theorem
\ref{thm:second inequality} we have the following chain of
inequalities
$$
E(r'_{1,j})\geq
E(\frac{{d_1^{S_1}}^2+\cdots+{d_{n_1}^{S_1}}^2}{\sqrt{({d_1^{S_1}}^3+\cdots+{d_{n_1}^{S_1}}^3)({d_1^{S_1}}+\cdots+{d_{n_1}^{S_1}})}}r_j^{S_2})\geq
E(r_{1,j})E(r_j^{S_2}),
$$
as $n_1\rightarrow \infty$. Moreover, we see that the lower bound of $r'_{1,j}$ depends on the degrees of $S_1$ and the correlation coefficient  $r_j^{S_2}$,
while $r_{1,j}$ depends only on the degrees of $S_1$. Therefore, for the higher values $r_{S_2}$
it will be more likely that the expected values of $r'_{1,j}$ is greater than the expected values of $r_{1,j}$.
In fact, if we choose $S_2$ to be the
graph such that $r_j^{S_2}$ is close to $1$ (if $S_2$ is regular then $r_j^{S_2}=1$) we can conclude that $
E(r'_{1,j})\geq E(r_{1,j})$, for every $1\leq j\leq n_2$, as
$n_1\rightarrow \infty$.
\bigskip


\subsubsection{Experimental and theoretical results for eigenvalues estimation}

Furthermore, we show the distributions of percentage errors in
estimated Laplacian spectra of the Kronecker product of graphs
compared to the actual spectrum. The error is calculated over one
hundred independent tests for the Kronecker product of the
Erd\H{o}s-R\'enyi random graphs with 50 and 30 vertices. The errors
for the estimated spectrum corresponding to the eigenvectors
$w_{i}^{S_{1}}\otimes w_{j}^{S_{2}}$ are always drawn on the left
hand side, while the errors for the estimated spectrum corresponding
to the eigenvectors $v_{i}^{S_{1}}\otimes v_{j}^{S_{2}}$ are always
drawn on the right hand side of
Figure~\ref{fig:er_eigenvalues_30x50}. Each row of the figure
corresponds to one of the edge density levels of 10\%, 30\%, and
65\%, respectively. The solid black  curve shows the median, and the
shaded areas show ranges from 5 to 95 percentiles. Notice that when
the edge density increases, the percentage errors become smaller for
both approximations. The characteristic shapes of error
distributions for the estimated spectrum corresponding to the
eigenvectors $w_{i}^{S_{1}}\otimes w_{j}^{S_{2}}$, seen in
Figure~\ref{fig:er_eigenvalues_30x50} (left hand side) have sudden
jumps at the beginning followed by a gradual decrease and they are
fairly consistent across various network density levels that we
tested. There is no a sudden jump at the beginning, for the
estimated spectrum corresponding to the eigenvectors
$v_{i}^{S_{1}}\otimes v_{j}^{S_{2}}$, but there is a small error
widening for the largest eigenvalues. In the case of the estimated
spectrum corresponding to the eigenvectors $w_{i}^{S_{1}}\otimes
w_{j}^{S_{2}}$, the median takes positive values for the
approximately first half of eigenvalues and negative values for the
second half. In the case of the estimated spectrum corresponding to
the eigenvectors $v_{i}^{S_{1}}\otimes v_{j}^{S_{2}}$, the
distribution of percentage errors becomes more stable, that is, the
median is almost a straight line with value 0 for every eigenvalue
(right hand side of Figure~\ref{fig:er_eigenvalues_30x50}). It could be
seen that the error ranges are almost uniformly distributed around
0. \vskip 0.2cm

\begin{figure}
\centering
  \begin{tabular}{@{}c@{}}
     \includegraphics[width=\textwidth]{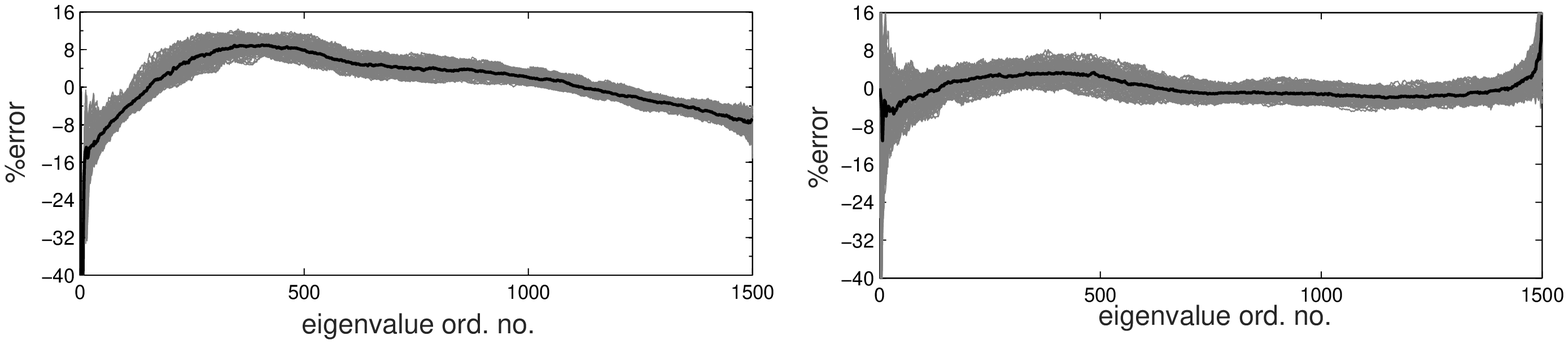}\\
     \includegraphics[width=\textwidth]{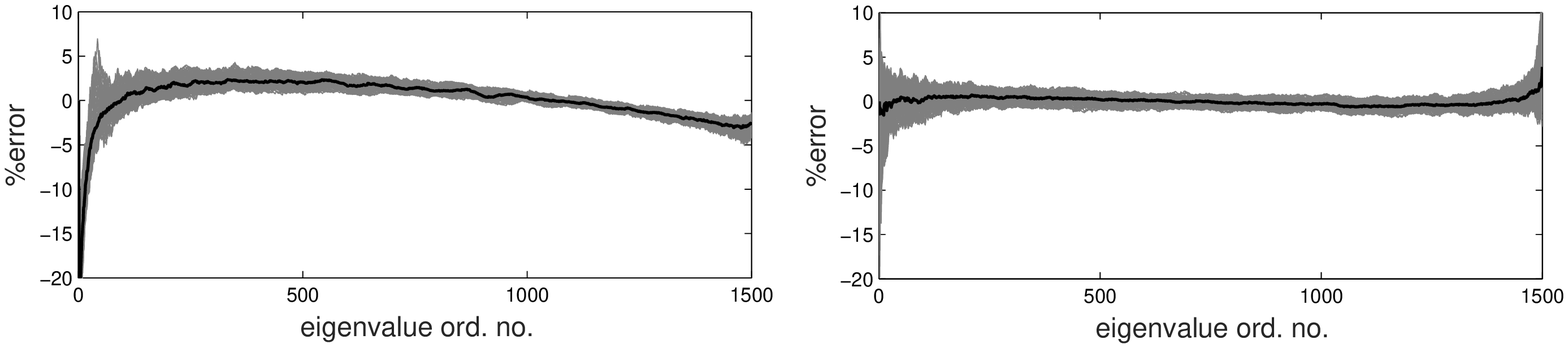}\\
     \includegraphics[width=\textwidth]{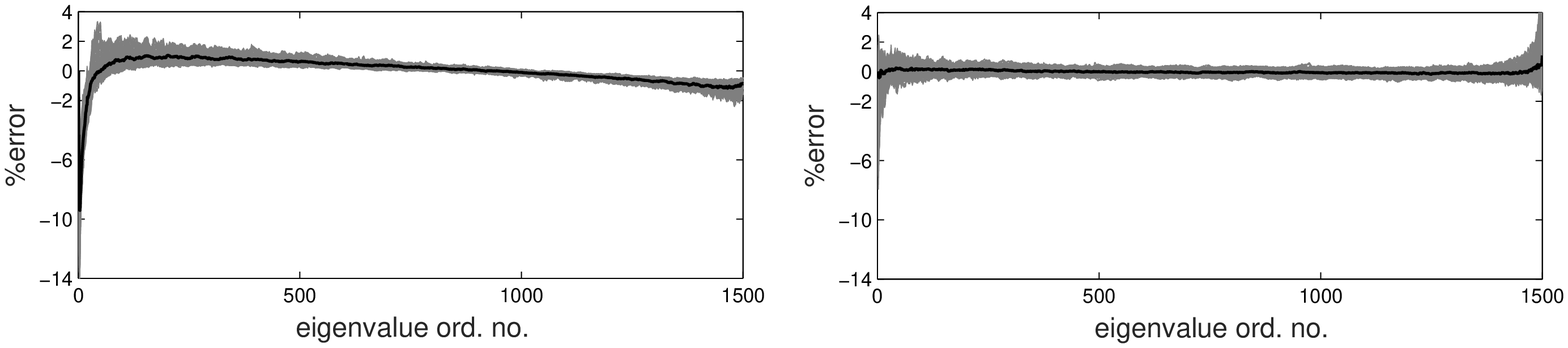}\\
  \end{tabular}
  \caption{Distribution of percentage errors in estimated Laplacian spectra of the Kronecker product of Erd\H{o}s-R\'enyi random graphs (50 and 30 vertices) compared to original ones. \textit{Left hand side} is reserved for the spectrum of the vectors $w_{i}^{S_{1}}\otimes w_{j}^{S_{2}}$ and \textit{right hand side} for the spectrum of the vectors $v_{i}^{S_{1}}\otimes v_{j}^{S_{2}}$. Rows correspond to the edge density levels of 10\%, 30\% and 65\%.}
  \label{fig:er_eigenvalues_30x50}
\end{figure}

Here we give a theoretical explanation of why the estimated
   eigenvalues corresponding to $v_i^{S_1}\otimes v_j^{S_2}$ for the
random graphs become more accurate to the real expected values when
the network grows or the edge density level increases. Conducted experiments show that this approximation produces reasonable estimation of Laplacian spectra with percentage errors confined within a $\pm$10\%, $\pm$5\% and $\pm$2\% range for most eigenvalues when the edge density percentages are 10\%, 30\% and 65\%, respectively.  We use the following statement in order to show a theoretical justification for the above claim.

\begin{theorem}
\label{thm:er_theory} \cite{chung2011spectra} Let $G$ be a random
graph, where $pr(v_{i}\sim v_{j}) = p_{ij}$, and each edge is
independent of each other edge. Let $A$ be the adjacency matrix of
$G$, and $\bar{A} = E(A)$, so $\overline{A}_{ij} = p_{ij}$. Let $D$
be the diagonal matrix with $D_{ii} = deg(v_{i})$, and $\bar{D}=
E(D)$. Let $\bar{\delta}=\bar{\delta}(G)$ be the minimum expected
degree of $G$, and $\mathcal{L}$ the normalized Laplacian matrix for
$G$. For any $\epsilon > 0$, if there exists a constant $k =
k(\epsilon)$ such that  $\bar{\delta} > k\, ln\, n$, then with
probability at least $1-\epsilon$, the $j$-th eigenvalues of
$\mathcal{L}$ and $\bar{\mathcal{L}}$ satisfy

$$|\lambda_{j}(\mathcal{L})-\lambda_{j}(\bar{\mathcal{L}})|\leq 2\sqrt{\frac{3 ln(\frac{4n}{\epsilon})}{\bar{\delta}}}$$

\noindent for all $1\leq j \leq n$, where $\bar{\mathcal{L}} = I -
\bar{D}^{-\frac{1}{2}}\bar{A}\,\bar{D}^{-\frac{1}{2}}$.
\end{theorem}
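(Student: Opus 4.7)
The plan is to reconstruct the concentration argument of Chung--Radcliffe, proceeding in three stages: (i) decompose $\mathcal{L}-\bar{\mathcal{L}}$ into pieces that isolate the randomness in $A-\bar A$ and in $D-\bar D$; (ii) bound each piece in spectral norm with high probability using scalar and matrix Chernoff/Bernstein inequalities; and (iii) transfer the operator-norm bound to an eigenvalue bound via Weyl's inequality.

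First, I would write the identity
\[
D^{-1/2}AD^{-1/2}-\bar D^{-1/2}\bar A\bar D^{-1/2} = \bar D^{-1/2}(A-\bar A)\bar D^{-1/2} + R,
\]
where the remainder $R$ collects the cross terms involving $D^{-1/2}-\bar D^{-1/2}$. The point of this split is that the leading term is a symmetric random matrix with mean zero and entries bounded by $1/\bar\delta$ in absolute value, so it is exactly the object one can control by matrix Bernstein. The remainder term is genuinely a nuisance caused by having to normalize by the random $D$ rather than by $\bar D$.

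Next, I would handle the degree fluctuations with a standard scalar Chernoff bound: since $D_{ii}$ is a sum of independent Bernoullis with mean $\bar D_{ii}\ge \bar\delta$, one has $|D_{ii}-\bar D_{ii}|\le c\sqrt{\bar D_{ii}\log(n/\epsilon)}$ simultaneously for all $i$ with probability at least $1-\epsilon/2$, provided $\bar\delta\gg \log n$. Combined with a first-order Taylor expansion of $x\mapsto x^{-1/2}$, this bounds $\|R\|$ by a lower-order quantity. For the leading term, I would apply the matrix Bernstein inequality to the sum of independent rank-one (or rank-two, after symmetrization) random matrices $\bar D^{-1/2}(A_{ij}-p_{ij})(e_ie_j^T+e_je_i^T)\bar D^{-1/2}$; its variance parameter is of order $1/\bar\delta$, yielding $\|\bar D^{-1/2}(A-\bar A)\bar D^{-1/2}\|\le 2\sqrt{3\log(4n/\epsilon)/\bar\delta}$ with probability at least $1-\epsilon/2$.

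Combining the two high-probability events by a union bound gives $\|\mathcal{L}-\bar{\mathcal{L}}\|\le 2\sqrt{3\log(4n/\epsilon)/\bar\delta}$ with probability at least $1-\epsilon$. The theorem then follows from Weyl's perturbation inequality, which states that $|\lambda_j(\mathcal{L})-\lambda_j(\bar{\mathcal{L}})|\le \|\mathcal{L}-\bar{\mathcal{L}}\|$ for every index $j$.

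The main obstacle is getting a sharp enough bound on the adjacency fluctuation $\|A-\bar A\|$ without losing polylogarithmic factors; this is precisely where the assumption $\bar\delta>k\log n$ is needed so that matrix Bernstein kicks in and the remainder $R$ contributes only a lower-order term. A secondary technical point is controlling $\|R\|$: naively expanding $D^{-1/2}-\bar D^{-1/2}$ loses a factor, and one has to use that on the high-probability event the ratio $D_{ii}/\bar D_{ii}$ lies in a narrow window around $1$, so that $\|R\|$ is of strictly smaller order than the leading term rather than comparable to it.
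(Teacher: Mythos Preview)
The paper does not prove this theorem at all: it is quoted verbatim from Chung and Radcliffe (the reference \cite{chung2011spectra}) and then \emph{applied} to bound the percentage error of the estimated spectrum. So there is no ``paper's own proof'' to compare against.

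That said, your sketch is exactly the Chung--Radcliffe argument, and it is correct in outline: split $\mathcal{L}-\bar{\mathcal{L}}$ into the centered term $\bar D^{-1/2}(A-\bar A)\bar D^{-1/2}$ plus a remainder coming from $D^{-1/2}-\bar D^{-1/2}$; control the first by matrix Bernstein (variance parameter $\sim 1/\bar\delta$, giving the $2\sqrt{3\ln(4n/\epsilon)/\bar\delta}$ bound) and the second by scalar Chernoff on the degrees (this is where the hypothesis $\bar\delta>k\ln n$ is used to make $R$ lower order); then finish with Weyl's inequality. The only cosmetic point is that in the cited paper the remainder is not literally absorbed as ``lower order'' but is bounded by a constant multiple of the same quantity $\sqrt{\ln(n/\epsilon)/\bar\delta}$, and the constant $2\sqrt{3}$ in the statement already accounts for both pieces; your description of $R$ as ``strictly smaller order'' slightly overstates this, but it does not affect the validity of the argument.
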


Let $G_{n,p}$ be a random graph with order $n$ and
probability of creation of an edge $p$. Since in the experiments we
use the factor graphs with the same edge density percentage (denote
these graphs by $G_{n_1,p_1}$ and $G_{n_2,p_2}$), without loss of
generality, we may set $p_{1}=p_{2}=p$ (an identical analysis can be
conducted when $p_{1}\neq p_{2}$). For the expected adjacency
matrices of the random graphs $G_{n_{1},p}$ and $G_{n_{2},p}$ hold
$\bar{A}_{n_{1}} = p(J_{n_{1}} - I_{n_{1}})$ and $\bar{A}_{n_{2}} =
p(J_{n_{2}} - I_{n_{2}})$. By $A_{n_1}$ and $A_{n_2}$ we denote the adjacency matrices
of $G_{n_1,p_1}$ and $G_{n_2,p_2}$. By $\mathcal{L}(G_{n_1,p}\otimes G_{n_2,p})$ we also denote the normalized Laplacian matrix for the graph $G_{n_1,p}\otimes G_{n_2,p}$.

First, we show that $\bar{\delta}=\bar{\delta}(G_{n_1,p}\otimes G_{n_2,p})\sim
n_1n_2$. Notice also that since the sum of each row of the matrix
$\bar{A}_{n_{1}}\otimes \bar{A}_{n_{2}}$ is equal to
$p^{2}(n_{1}-1)(n_{2}-1)$, then it is clear that
$\delta(\bar{G}_{n_1,p}\otimes \bar{G}_{n_2,p}) =
p^{2}(n_{1}-1)(n_{2}-1)$. Let $Z=\min\{d^1_id^2_k\ |\ 1\leq i\leq
n_1\ 1\leq k\leq n_2\}$, where $d_i^1$ and $d_k^2$ are the degrees
of the vertices in $G_{n_1,p_1}$ and $G_{n_2,p}$, respectively.
Therefore, we have that $\bar{\delta}=E(Z)$. According to Jensen's
inequality, it holds that $e^{-t\bar{\delta}}\leq E(e^{-tZ})$, for
any positive real $t$. Furthermore, according to the definition of
$Z$, we have the following chain of relation

\begin{eqnarray}
e^{-t\bar{\delta}}\leq
E(e^{-tZ})=E(e^{-t\min_{i,j} \{d^1_id^2_k\}})= E(\max_{i,j} e^{-t
d^1_id^2_k})\leq \sum_{i,j}E( e^{-t d^1_id^2_k})=n_1n_2E( e^{-t
d^1_id^2_k}),
\label{eq:delta_bar}
\end{eqnarray}
for any $1\leq i\leq n_1$ and $1\leq k\leq n_2$. 

As $n_1, n_2\rightarrow \infty$, according to the central limit
theorem $d^1_i$ and $d^2_k$ have asymptotic normal distribution
$AN(\mu_1,\sigma_1^2)$ and $AN(\mu_2,\sigma_2^2)$, respectively,
where $\mu_1=n_1p$, $\mu_2=n_2p$, $\sigma_1=\sqrt{n_1pq}$ and
$\sigma_2=\sqrt{n_2pq}$. Considering the two dimensional variable
$X=[d^1_i,d^2_k]$ it can be concluded that it has asymptotic normal
distribution and since $g(x,y)=e^{-t xy}$ is a continuously
differentiable function we conclude that $g(X)$ has an asymptotic
normal distribution. Therefore, when $n_1, n_2\rightarrow \infty$,
we have that $E( e^{-t d^1_id^2_k})=\frac{1}{2\pi\sigma_1\sigma_2}
\int_{-\infty}^{\infty}\int_{-\infty}^{\infty} e^{-txy}e^{\frac 1
2(\frac{x-\mu_1}{\sigma_1})}e^{\frac 1
2(\frac{y-\mu_2}{\sigma_2})}dxdy$. After the substitutions
$x\rightarrow \frac{x-\mu_1}{\sigma_1}$, $y\rightarrow
\frac{y-\mu_2}{\sigma_2}$ and certain number of elementary
algebraic transformations we obtain that

\begin{eqnarray*}
E( e^{-t d^1_id^2_k})&=&\frac{e^{-t \mu_1\mu_2}}{2\pi}
\int_{-\infty}^{\infty} e^{-t\mu_1\sigma_2y-\frac{y^2}{2}}
e^{\frac{t^2\sigma_1^2(\mu_2+\sigma_2y)^2}{2}}
\int_{-\infty}^{\infty}
e^{\frac{(x+t\sigma_1(\mu_2+\sigma_2y))^2}{2}}dx dy\\
&=& \frac{e^{-t \mu_1\mu_2}}{\sqrt{2\pi}} \int_{-\infty}^{\infty}
e^{-t\mu_1\sigma_2y-\frac{y^2}{2}+\frac{t^2\sigma_1^2(\mu_2+\sigma_2y)^2}{2}}dy.
\end{eqnarray*}

The last integral can be rewritten in the following form
$\frac{1}{\sqrt{2\pi}}e^{-t \mu_1\mu_2} e^{\frac{{t^2
\sigma_1^2\mu_2^2}}{2}} \int_{-\infty}^{\infty} e^{\frac{-y^2
A-2yB}{2}}$, where $A=1-t^2\sigma_1^2\sigma_2^2$  and
$B=-t\mu_1\sigma_2+t^2\sigma_1^2\mu_2\sigma_2$. Finally, we have
that
\begin{eqnarray*}
E( e^{-t d^1_id^2_k})&=&\frac{1}{\sqrt{2\pi}}e^{-t
\mu_1\mu_2+\frac{{t^2
\sigma_1^2\mu_2^2}}{2}}e^{\frac{B^2}{2A}}\int_{-\infty}^{\infty}
e^{\frac{-(\sqrt{A}(y-\frac{B}{A}))^2}{2}}dy=\frac{1}{\sqrt{2\pi}}e^{-t
\mu_1\mu_2+\frac{{t^2
\sigma_1^2\mu_2^2}}{2}+\frac{B^2}{2A}}\frac{\sqrt{2\pi}}{\sqrt{A}}\\
&=& \frac{e^{-t \mu_1\mu_2+\frac{{t^2
\sigma_1^2\mu_2^2}}{2}+\frac{(-t\mu_1\sigma_2+t^2\sigma_1^2\mu_2\sigma_2)^2}{2(1-t^2\sigma_1^2\sigma_2^2)}}}{\sqrt{1-t^2\sigma_1^2\sigma_2^2}}\\
&=&\frac{e^{\frac{-2t
\mu_1\mu_2+(\mu_1^2\sigma_2^2+\mu_2^2\sigma_1^2)t^2}{2(1-t^2\sigma_1^2\sigma_2^2)}}}{\sqrt{1-t^2\sigma_1^2\sigma_2^2}}.
\end{eqnarray*}

According to (\ref{eq:delta_bar}) we obtain
$$
\bar{\delta}\geq -\frac{ln(n_1n_2)}{t}-\frac{-2
\mu_1\mu_2+(\mu_1^2\sigma_2^2+\mu_2^2\sigma_1^2)t}{2(1-t^2\sigma_1^2\sigma_2^2)}+\frac{ln(1-t^2\sigma_1^2\sigma_2^2)}{2t},
$$
for every $t>0$. Now, if we set $t=\frac{1}{\mu_2\sigma_1}$, we can
easily obtain that the leading summand of the right hand side of the
above inequality is $\mu_1\mu_2$, hence we further conclude that
$\bar{\delta}=\Omega (n_1n_2)$, when $n_1,n_2\rightarrow \infty$.

\bigskip

Let $Spectrum(\bar{A}_{n_{1}})$, $Spectrum(\bar{A}_{n_{2}})$,
$Spectrum(\bar{A}_{n_{1}}\otimes \bar{A}_{n_{2}})$ and
$Spectrum(\mathcal{L}(\bar{A}_{n_{1}}\otimes \bar{A}_{n_{2}}))$ be
the multisets of the eigenvalues of the matrices $\bar{A}_{n_{1}}$,
$\bar{A}_{n_{2}}$, $\bar{A}_{n_{1}}\otimes \bar{A}_{n_{2}}$ and
$\mathcal{L}(\bar{A}_{n_{1}}\otimes \bar{A}_{n_{2}})$, respectively.
In order to calculate $Spectrum(\mathcal{L}(\bar{A}_{n_{1}}\otimes
\bar{A}_{n_{2}}))$, we need to determine the diagonal matrix
$D(\bar{A}_{n_{1}}\otimes \bar{A}_{n_{2}})$,
$Spectrum(\bar{A}_{n_{1}})$, $Spectrum(\bar{A}_{n_{2}})$ and
$Spectrum(\bar{A}_{n_{1}}\otimes \bar{A}_{n_{2}})$, but for the sake
of simplicity, these steps are skipped. So, the normalized Laplacian
spectrum of the expected adjacency matrix of the Kronecker product
of two random graphs consists of

\begin{equation}
{1\,\,\, n_{2}-1\,\,\,\, n_{1}-1\,\,\,\, (n_{1}-1)(n_{2}-1)} \choose
{0\,\,\,\,\,\, \frac{n_{2}}{n_{2}-1}\,\,\,\,
\frac{n_{1}}{n_{1}-1}\,\,\,\, 1-\frac{1}{(n_{1}-1)(n_{2}-1)}}
\label{eq:estimated_spectrum}
\end{equation}
where the second row represents the eigenvalues, while the first row
represents the corresponding algebraic multiplicities.

Since $\overline{\delta}=\Omega(n_1n_2)\gg ln(n_{1}n_{2})$, we can
apply Theorem~\ref{thm:er_theory} by putting $\epsilon =
\frac{1}{\sqrt{n_{1}n_{2}}}$ and obtain

\begin{equation}
\label{eq:normalized_spectrum}
|\lambda_{j}(\mathcal{L}(G_{n_{1},p}\otimes
G_{n_{2},p}))-\lambda_{j}(\mathcal{L}(\bar{A}_{n_{1}}\otimes
\bar{A}_{n_{2}}))| \leq 2\sqrt{\frac{3ln4+
\frac{9ln(n_{1}n_{2})}{2}}{ n_{1}n_{2}}} = o(1),
\end{equation}
with probability greater than or equal to $1 -
\frac{1}{\sqrt{n_{1}n_{2}}} = 1 - o(1)$.

\smallskip

In the following, we estimate the difference between $d_i^1d_k^2$
and $\bar{\delta}$ by using Chebyshev's inequality, i.e.
$Pr(|d_i^1d_k^2-\bar{\delta}|<\epsilon\sigma(d_i^1d_k^2))\geq
1-\frac{1}{\epsilon^2}$, for any real $\epsilon>0$. Since $d_i^1$
and $d_k^2$ are independent, we have that
$\sigma^2(d_i^1d_k^2)=\mu_1\sigma_2+\mu_2\sigma_1+\sigma_1\sigma_2$.
Therefore, for $\epsilon=\sqrt[4]{n_1n_2}$ it can be concluded that
$$
|d_i^1d_k^2-\bar{\delta}|<\sqrt{\sqrt{n_1n_2}(\mu_1\sigma_2+\mu_2\sigma_1+\sigma_1\sigma_2)}
$$
with probability greater than or equal to $1 -
\frac{1}{\sqrt{n_{1}n_{2}}} = 1 - o(1)$. Furthermore, since $0\leq\lambda_{j}(\mathcal{L}(G_{n_{1},p}\otimes
G_{n_{2},p}))\leq 2$ and $
\lambda_{j}(L(\bar{A}_{n_{1}}\otimes \bar{A}_{n_{2}}))=\bar{\delta}
\lambda_{j}(\mathcal{L}(\bar{A}_{n_{1}}\otimes
\bar{A}_{n_{2}}))=\bar{\delta}O(1)=\Omega(n_1n_2)O(1)$, which
follows from the formula $L =
D^{\frac{1}{2}}\mathcal{L}D^{\frac{1}{2}}$ and the property that the
graph $\bar{G}_{n_{1},p}\otimes \bar{G}_{n_{2},p}$ is regular, it
holds that
\begin{equation}
\label{eq:first summand}
\frac{|d_i^1d_k^2-\bar{\delta}|\lambda_{j}(\mathcal{L}(G_{n_{1},p}\otimes
G_{n_{2},p}))}{\lambda_{j}(L(\bar{A}_{n_{1}}\otimes
\bar{A}_{n_{2}}))}<\frac{\sqrt{\sqrt{n_1n_2}(\mu_1\sigma_2+\mu_2\sigma_1+\sigma_1\sigma_2)}}{\Omega(n_1n_2)O(1)}=o(1).
\end{equation}

\smallskip

On the other hand, by multiplying both hand sides of the
inequality~\eqref{eq:normalized_spectrum} with $\bar{\delta}$ and
dividing by $\lambda_{j}(L(\bar{A}_{n_{1}}\otimes
\bar{A}_{n_{2}}))$, we obtain
\begin{eqnarray}
\label{eq:second summand}
\frac{
|\bar{\delta}\lambda_{j}(\mathcal{L}(G_{n_{1},p}\otimes
G_{n_{2},p})) - \lambda_{j}(L(\bar{A}_{n_{1}}\otimes
\bar{A}_{n_{2}}))|}{\lambda_{j}(L(\bar{A}_{n_{1}}\otimes
\bar{A}_{n_{2}}))}\leq \frac{\bar{\delta}\,
o(1)}{\bar{\delta}O(1)}=o(1).
\end{eqnarray}

By adding the inequalities (\ref{eq:first summand}) and
(\ref{eq:second summand}), we finally conclude that
\begin{eqnarray}
\notag
&&\frac{|d_i^1d_k^2\lambda_{j}(\mathcal{L}(G_{n_{1},p}\otimes
G_{n_{2},p})) - \lambda_{j}(L(\bar{A}_{n_{1}}\otimes
\bar{A}_{n_{2}}))|}{\lambda_{j}(L(\bar{A}_{n_{1}}\otimes
\bar{A}_{n_{2}}))}\\
&=&\frac{|d_i^1d_k^2\lambda_{j}(\mathcal{L}(G_{n_{1},p}\otimes
G_{n_{2},p}))-\bar{\delta}\lambda_{j}(\mathcal{L}(G_{n_{1},p}\otimes
G_{n_{2},p}))+\notag
\bar{\delta}\lambda_{j}(\mathcal{L}(G_{n_{1},p}\otimes G_{n_{2},p}))
- \lambda_{j}(L(\bar{A}_{n_{1}}\otimes
\bar{A}_{n_{2}}))|}{\lambda_{j}(L(\bar{A}_{n_{1}}\otimes
\bar{A}_{n_{2}}))}\\
&\leq&\frac{|d_i^1d_k^2\lambda_{j}(\mathcal{L}(G_{n_{1},p}\otimes
G_{n_{2},p}))-\bar{\delta}\lambda_{j}(\mathcal{L}(G_{n_{1},p}\otimes
G_{n_{2},p}))|+|\notag
\bar{\delta}\lambda_{j}(\mathcal{L}(G_{n_{1},p}\otimes G_{n_{2},p}))
- \lambda_{j}(L(\bar{A}_{n_{1}}\otimes
\bar{A}_{n_{2}}))|}{\lambda_{j}(L(\bar{A}_{n_{1}}\otimes
\bar{A}_{n_{2}}))}=o(1).
\end{eqnarray}

\smallskip



In the previous formula we show that percentage error between the estimated spectra and the spectra of Laplacian of expected Kronecker random graph tends to zero, when $n_1$ and $n_2$ tend to infinity, while 
in the performed experiments we calculate the
percentage error between the estimated and actual spectra (estimated spectra is given by (\ref{eq:novel_spectrum2})). Therefore, in the rest of the section we give an asymptotic estimate of the percentage error between the  estimated spectra and the mean of the eigenvalues of Laplacian matrix.

Indeed, some empirical evidence indicate that the mean of the empirical distribution of the eigenvalues of the Laplacian matrix
of $G(n,p)$ is centered around $np$ (see \cite{Olivier-Leveque}). Similarly, if we denote mean of the empirical distribution of the eigenvalues of the Laplacian matrix
of $G(n_1,p)\otimes G(n_1,p)$ by $\bar{\lambda}$, we can conclude that $\bar{\lambda}\sim n_1n_2$ and therefore

\begin{equation}
\frac{|d_i^1d_k^2\lambda_j(\mathcal{L}(G_{n_{1},p}\otimes
G_{n_{2},p})) - \bar{\lambda}(L(A_{n_{1}}\otimes
A_{n_{2}}))|}{\bar{\lambda}(L(A_{n_{1}}\otimes
A_{n_{2}}))}  =
o(1). \label{eq:tend}
\end{equation}

Therefore, in
that case we conclude that the formula~\eqref{eq:tend} represents
the percentage error of the estimated spectrum
$d_i^1d_k\lambda_j(\mathcal{L}(G_{n_{1},p}\otimes
G_{n_{2},p}))$ from (\ref{eq:novel_spectrum2}), which tends to 0
when the order of the graph or its edge density tends to infinity.

\bigskip

\textit{Watts-Strogatz random graphs} \vskip 0.2cm

Similarly, we apply the same experiments when two
graphs are Watts-Strogatz graphs. By examining the spectral
properties of the Kronecker product of graphs that are
Watts-Strogatz graphs, we notice that the situation is a bit
different since even when the graphs are sparse (edge density level
is 10\%), the smoothed probability density functions of the vector
correlation coefficients are shrank toward the value of 1, for both
approximations. For the same density, peaks for both approximations
are located in the interval $(0.9, 1)$. When the edge density level
is 30\% and more,  extremely high values of  the correlation coefficients
become more noticeable. In Figure~\ref{fig:ws_eigenvectors_30x50}
the smoothed probability density functions of vector correlation
coefficients are drawn when two graphs are Watts-Strogatz random
graphs with 50 and 30 vertices. The figure shows correlation
coefficients from five independent numerical results when the edge
density level is set to 10\% (left), 30\% (middle) and 65\% (right).

As in the case of Erd\H{o}s-R\'enyi graphs, the distribution of
percentage errors of the estimated spectrum corresponding to the
eigenvectors $v_{i}^{S_{1}}\otimes v_{j}^{S_{2}}$ is almost
uniformly distributed around 0 for each tested edge density, while
the distribution of percentage errors of the estimated spectrum
corresponding to the eigenvectors $w_{i}^{S_{1}}\otimes
w_{j}^{S_{2}}$ always has a sudden jump at the beginning. In
Figure~\ref{fig:ws_eigenvalues_30x50}, errors for the estimated
spectrum from Subsection~\ref{subsec:sayama_approx} are drawn on the
left side, while errors for the estimated spectrum from
Subsection~\ref{subsec:novel_approx} on the right side are drawn. As in case of the Erd\H{o}s-R\'enyi random graphs, both approximations produced reasonable estimations of Laplacian spectra with percentage errors confined within a $\pm$10\% and less as the edge density percentage becomes higher.

\vskip 0.2cm

\begin{figure}
\centering
\includegraphics[width=\textwidth]{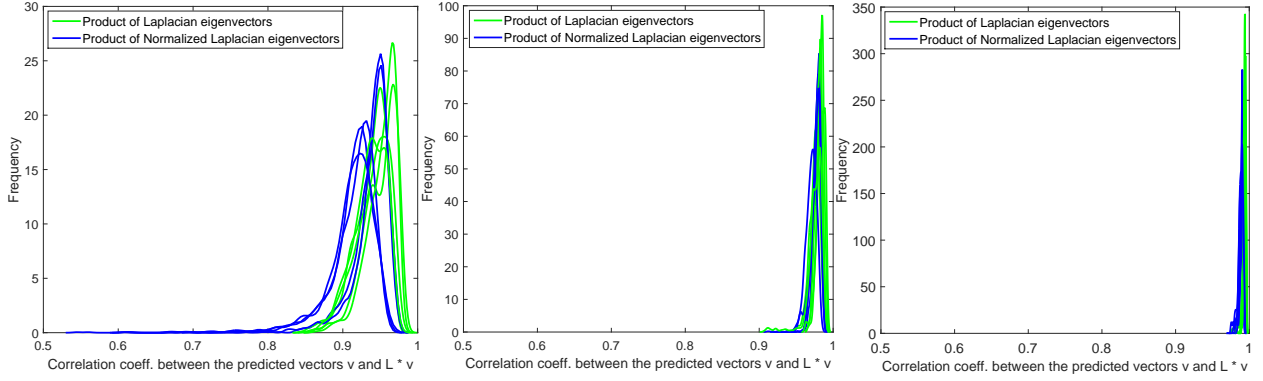}  
\caption{Smoothed probability density functions of vector
correlation coefficients between $w_{i}^{S_{1}}\otimes
w_{j}^{S_{2}}$ and $L_{S_{1}\otimes S_{2}}(w_{i}^{S_{1}}\otimes
w_{j}^{S_{2}})$ are represented using a solid green  line, while
between $v_{i}^{S_{1}}\otimes v_{j}^{S_{2}}$ and $L_{S_{1}\otimes
S_{2}}(v_{i}^{S_{1}}\otimes v_{j}^{S_{2}})$ are represented using a solid
blue  line. Watts-Strogatz random graphs have 50 and 30
vertices. Probability density functions are drawn for each of the
edge density level 10\%, 30\% and 65\%, respectively.}
\label{fig:ws_eigenvectors_30x50}
\end{figure}

\begin{figure}
\centering
  \begin{tabular}{@{}c@{}}
     \includegraphics[width=\textwidth]{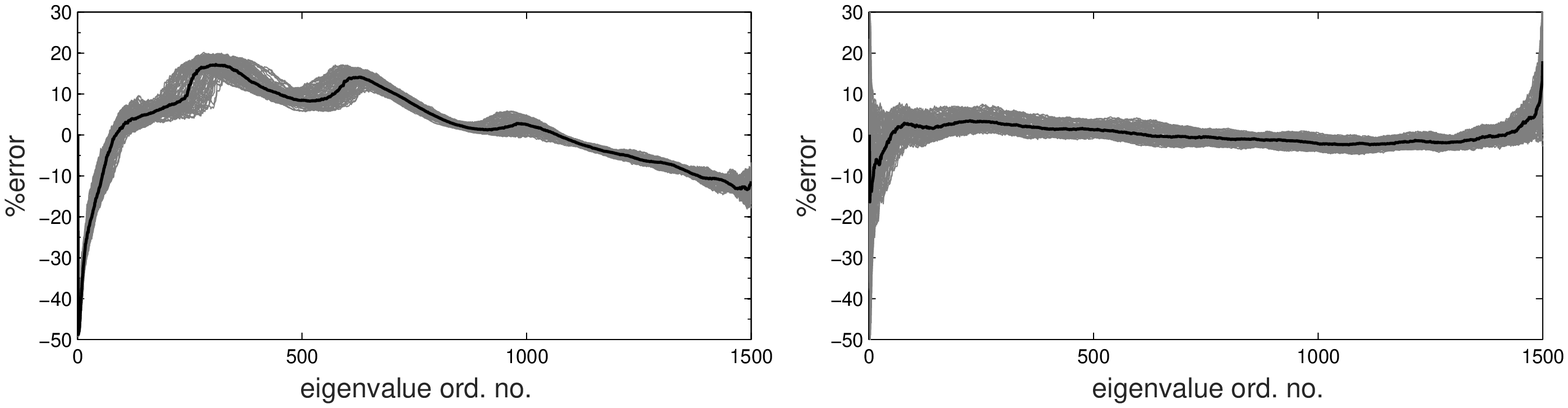}\\
     \includegraphics[width=\textwidth]{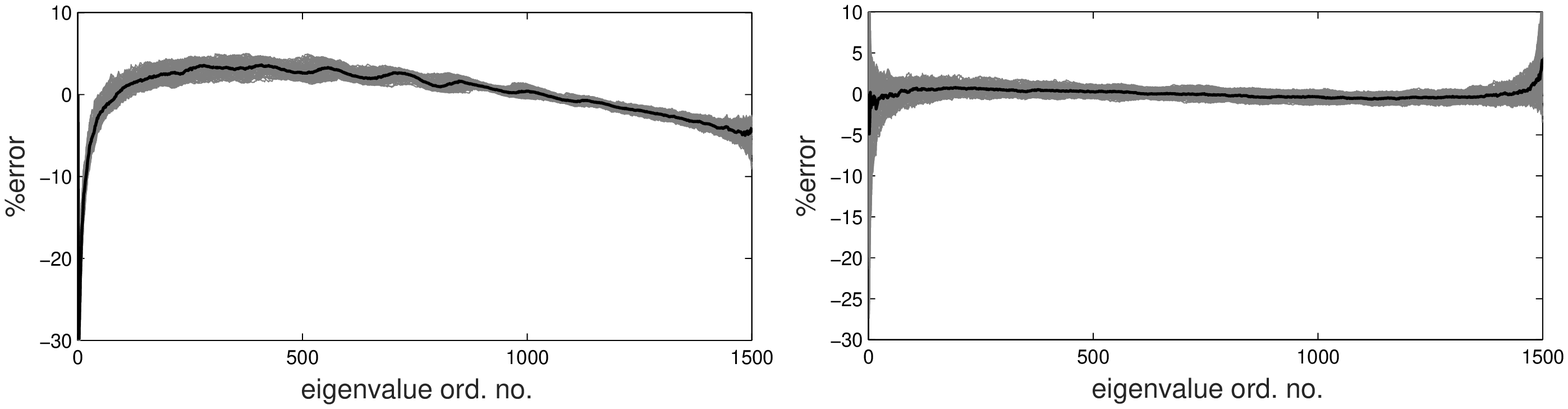}\\
     \includegraphics[width=\textwidth]{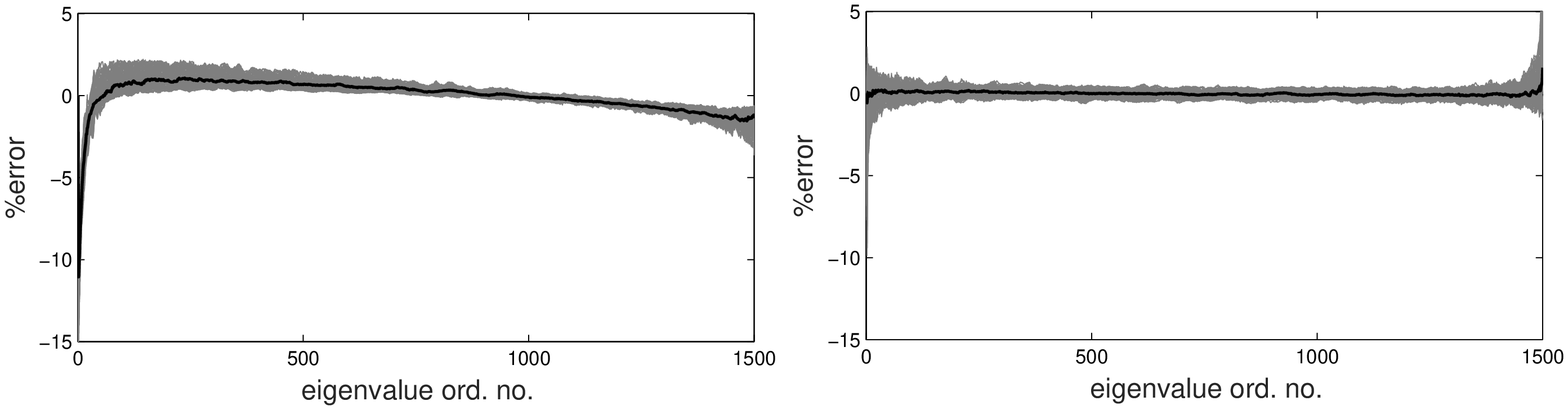}
  \end{tabular}
  \caption{Distribution of percentage errors in estimated Laplacian spectra of the Kronecker product of Watts-Strogatz graphs (50 and 30 vertices) compared to original ones. \textit{Left hand side} is reserved for the spectrum of the eigenvectors $w_{i}^{S_{1}}\otimes w_{j}^{S_{2}}$ and \textit{right hand side} for the spectrum of the eigenvectors $v_{i}^{S_{1}}\otimes v_{j}^{S_{2}}$. Rows correspond to the edge density levels of 10\%, 30\% and 65\%.}
  \label{fig:ws_eigenvalues_30x50}
\end{figure}

\subsection{Barab\'asi-Albert graphs}
\label{subsec:ba_vectors}

In this section we present a behavior of the eigenvectors and
eigenvalues of the Kronecker product of two graphs which are
Barab\'asi-Albert graphs. For this type of graph, the situation is
not significantly different compared to the previous two types
concerning correlation coefficients of the estimated eigenvalues. In
all cases $w_{i}^{S_{1}}\otimes w_{j}^{S_{2}}$ eigenvectors express
better properties, since their correlation coefficients are above
0.9 in most of the cases, while the correlation coefficients of the
$v_{i}^{S_{1}}\otimes v_{j}^{S_{2}}$ eigenvectors are in interval
(0.7, 0.9) most of cases (see
Figure~\ref{fig:ba_eigenvectors_30x50}), for the edge density levels
of 10\%, 30\%, and 65\%.

Also, we notice that the estimated eigenvalues corresponding to the
eigenvectors $w_{i}^{S_{1}}\otimes w_{j}^{S_{2}}$ are more stable
than the eigenvalues corresponding to the eigenvectors
$v_{i}^{S_{1}}\otimes v_{j}^{S_{2}}$. From
Figure~\ref{fig:ba_eigenvalues_30x50} it can be noticed that the
error ranges (which correspond to edge density levels of
10\%, 30\% and 65\%) between the estimated and original spectrum are less for the first approximation (left panels) than for the second one, which are at the same time more distorted (right panels). The
characteristic shape of error distribution for the estimated
spectrum corresponding to the eigenvectors $w_{i}^{S_{1}}\otimes
w_{j}^{S_{2}}$ remains similar as in the previous subsection. This
includes a sudden jump at the beginning followed by a gradual
decrease across various network density levels we tested. Unlike the
previous subsection, the estimated spectrum corresponding to the
eigenvectors $v_{i}^{S_{1}}\otimes v_{j}^{S_{2}}$ has a sudden jump
at the beginning and the error ranges are a little bit higher for
all edge densities (right panels of
Figure~\ref{fig:ba_eigenvalues_30x50}). When the edge density is from 50\% to 65\%, the characteristic shape for the second approximation (right panels) is a bit different than usual. It could be noticed sudden jump in the middle of graphs, but in the same time error narrowing for the largest eigenvalues.

\begin{figure}
\centering
\includegraphics[width=\textwidth]{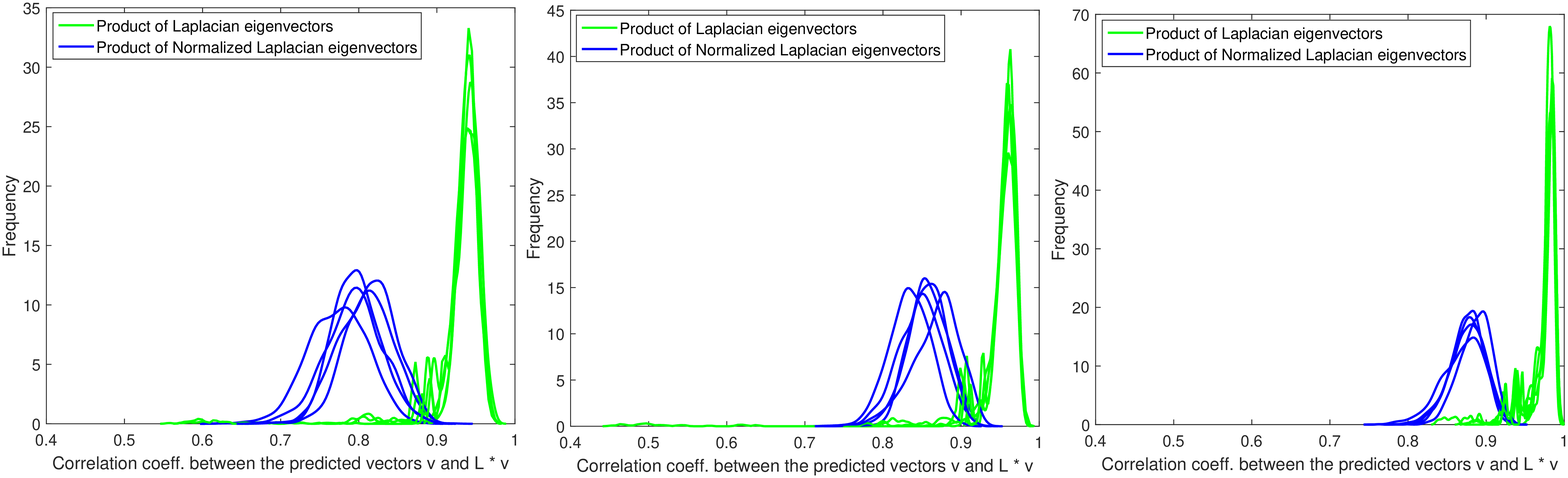} 
\caption{Smoothed probability density functions of vector
correlation coefficients between $w_{i}^{S_{1}}\otimes
w_{j}^{S_{2}}$ and $L_{S_{1}\otimes S_{2}}(w_{i}^{S_{1}}\otimes
w_{j}^{S_{2}})$ are represented using a solid green  line, while
between $v_{i}^{S_{1}}\otimes v_{j}^{S_{2}}$ and $L_{S_{1}\otimes
S_{2}}(v_{i}^{S_{1}}\otimes v_{j}^{S_{2}})$ are represented using a
solid blue  line. Barab\'asi-Albert random graphs have 50 and 30
vertices. Probability density functions are drawn for each of the
edge density 10\%, 30\% and 65\%, respectively.}
\label{fig:ba_eigenvectors_30x50}
\end{figure}

\begin{figure}
\centering
  \begin{tabular}{@{}c@{}}
     \includegraphics[width=\textwidth]{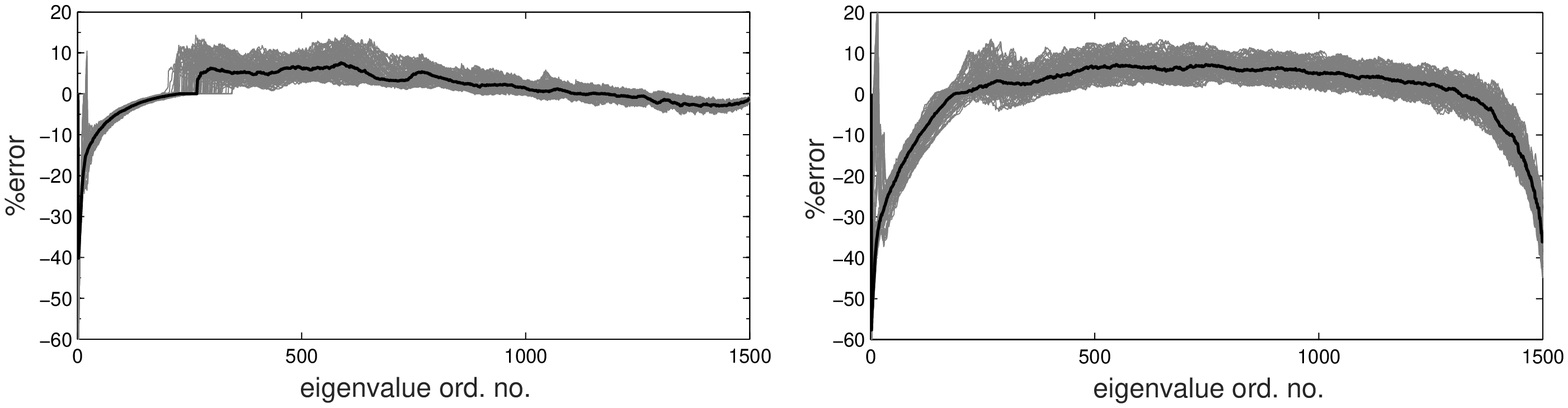}\\
     \includegraphics[width=\textwidth]{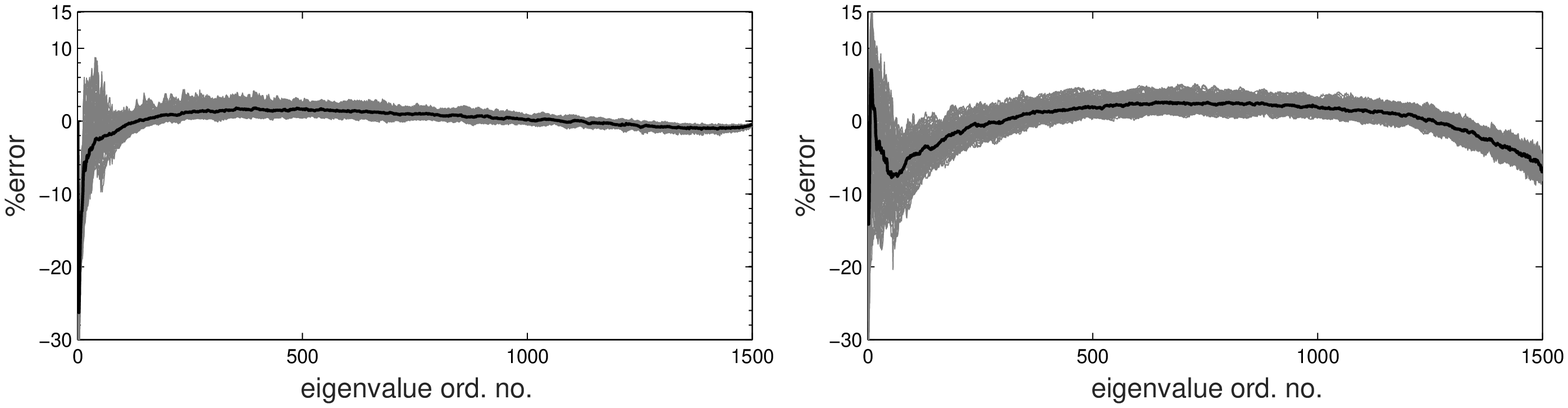}\\
     \includegraphics[width=\textwidth]{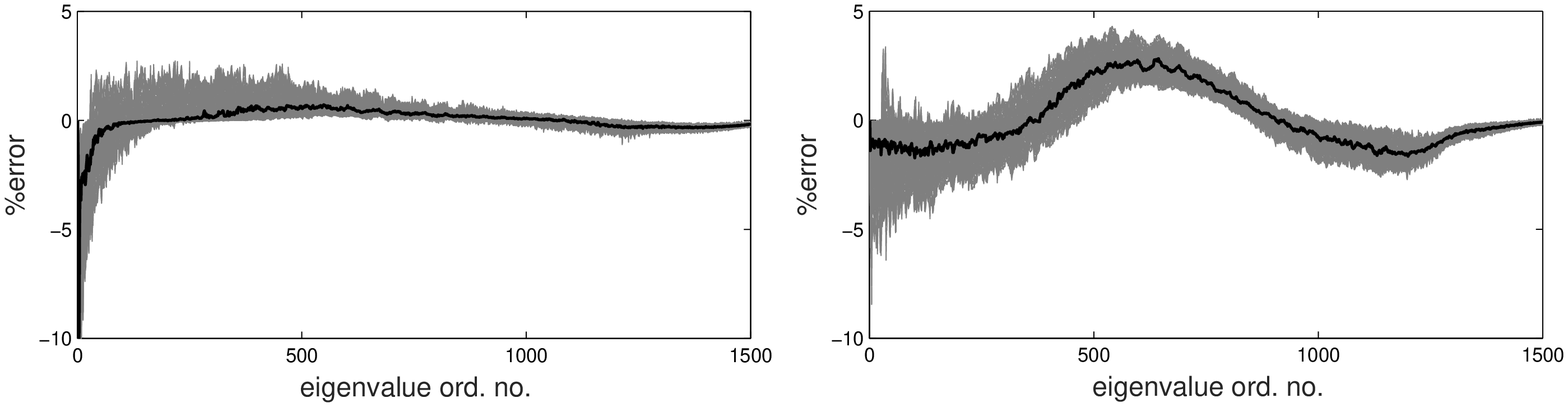}
  \end{tabular}
  \caption{Distribution of percentage errors in estimated Laplacian spectra of the Kronecker product of Barab\'asi-Albert graphs (50 and 30 vertices) compared to original ones. \textit{Left hand side} is reserved for the spectrum of the eigenvectors $w_{i}^{S_{1}}\otimes w_{j}^{S_{2}}$ and \textit{right hand side} for the spectrum of the eigenvectors $v_{i}^{S_{1}}\otimes v_{j}^{S_{2}}$. Rows correspond to the edge density levels of 10\%, 30\%, and 65\%.}
  \label{fig:ba_eigenvalues_30x50}
\end{figure}

\section{Conclusion}
Although the relationships between spectral properties of a product
graph and those of its factor graphs have been known for the
standard products, characterization of Laplacian spectrum and
eigenvectors of the Kronecker product of graphs using the Laplacian
spectra of the factors has remained an open problem to date. In this
work we proposed a novel approximation method for estimating the
Laplacian spectrum and the corresponding eigenvectors of the
Kronecker product of graphs knowing the eigenvalues and eigenvectors
of factor graphs. The estimated eigenvalues and eigenvectors were
compared to the original ones with regard to different types of
random networks and theirs edge density levels. Moreover, the
properties of the novel approximation were compared with the
approximation proposed by Sayama. Although both approximations were
designed using a few mathematically incorrect assumptions, the
obtained estimations of the spectra are very close to the
numerically calculated spectra with percentage errors constrained
within a $\pm$10\% range for most eigenvalues. Here, we give a
theoretical explanation of why the estimated eigenvalues for the
random graphs become more accurate to the real values when the
network grows or the edge density level increases. This explains the
fact that a distribution of percentage errors between estimated and
original spectra becomes almost uniformly distributed around 0. In
this paper we also presented some novel theoretical results related
to the certain correlation coefficients corresponding to the
estimated and original vectors. Here, we provide an exact formula of
how some of these correlation coefficients can be explicitly
calculated, as well as their expected values for some types of
random networks.

As it was mentioned earlier, in this and Sayama's paper, these
approximations have many theoretical limitations, because of the
mathematically incorrect assumptions and there is no rigorous
mathematical explanation of why and how the proposed methods work.
That is why a design of spectral estimation algorithms will be an
important direction of future research, as well as their theoretical
explanations. 
Moreover, it would be very important to see how the estimated eigenvalues and eigenvectors are
suitable for complete spectral decomposition of the graph, where all
eigenvalues and eigenvectors are included to replace original ones.  According to some preliminary results we have already obtained by incorporating these approximations in the GCRF model, a good behaviour of these approximations presented in this paper have been experimentally confirmed too. Moreover, we obtained that the presented estimations can be a good staring point for other applications and further improvements of Laplacian spectrum of the Kronecker product of graphs.


\newpage
\bibliography{references.bib}

\end{document}